\documentclass[runningheads]{llncs}
\usepackage[T1]{fontenc}
\usepackage{graphicx}
\usepackage{amsmath,amssymb}
\usepackage[bookmarks=false,colorlinks=true,citecolor=blue]{hyperref}
\usepackage{comment}
\usepackage[linesnumbered,ruled,vlined,boxed,commentsnumbered,linesnumbered, longend]{algorithm2e}

\usepackage{etoolbox}
\AtEndEnvironment{proof}{\qed}
\newtheorem{observation}[theorem]{Observation}{\bfseries}{\itshape}
\begin{document}
\title{Parameterized algorithms for $k$-Inversion\thanks{Supported by ANRF research grants ANRF/ECRG/2024/001049/ENS, CRG/2022/006770, and MTR/2022/000692.}}
%
%
%
%
\author{
Dhanyamol Antony\inst{1} \and
L.~Sunil Chandran\inst{2} \and
Dalu Jacob\inst{3} \and
R.~B.~Sandeep\inst{4}
}

\authorrunning{D. Antony et al.}

\institute{
School of Data Science, Indian Institute of Science Education and Research Thiruvananthapuram,\\
Thiruvananthapuram, India\\
\email{dhanyamolantony@iisertvm.ac.in}
\and
Department of Computer Science and Automation, Indian Institute of Science,\\
Bengaluru, India\\
\email{sunil@iisc.ac.in}
\and
Department of Mathematics, Indian Institute of Technology Delhi,\\
New Delhi, India\\
\email{dalujacob@iitd.ac.in}
\and
Department of Computer Science and Engineering, Indian Institute of Technology Dharwad,\\
Dharwad, India\\
\email{sandeeprb@iitdh.ac.in}
}

\maketitle              
\begin{abstract}
Inversion of a directed graph $D$ with respect to a vertex subset $Y$ is the directed graph obtained from $D$ 
by reversing the direction of every arc whose endpoints both lie in $Y$. More generally, the inversion of $D$ 
with respect to a tuple $(Y_1, Y_2, \ldots, Y_\ell)$ of vertex subsets is defined as the directed graph obtained 
by successively applying inversions with respect to $Y_1, Y_2, \ldots, Y_\ell$. 
Such a tuple is called a \emph{decycling family} of $D$ if the resulting graph is acyclic.

In the \textsc{$k$-Inversion} problem, the input consists of a directed graph $D$ and an integer $k$, and the task is to decide whether $D$ admits a decycling family of size at most $k$. Alon et al.\ (SIAM J.\ Discrete Math., 2024) proved that the problem is NP-complete for every fixed value of $k$, thereby ruling out XP algorithms, and presented a fixed-parameter tractable (FPT) algorithm parameterized by $k$ for tournament inputs.

In this paper, we generalize their algorithm to a broader variant of the problem on tournaments and subsequently use this result to obtain an FPT algorithm for \textsc{$k$-Inversion} when the underlying undirected graph of the input is a block graph. Furthermore, 
we obtain an algorithm for \textsc{$k$-Inversion} on general directed graphs with running time $2^{O(\mathrm{tw}(k + \mathrm{tw}))} \cdot n^{O(1)}$, where $\mathrm{tw}$ denotes the treewidth of the underlying graph.

\keywords{Parameterized algorithms  \and Inversion \and Block graphs.}
\end{abstract}
\newpage
\section{Introduction}

Inversion of a directed graph $D$ with respect to a vertex subset $Y$, denoted by
$D \oplus Y$, is the directed graph obtained from $D$ by reversing the direction
of every arc whose endpoints both lie in $Y$. More generally, the inversion of
$D$ with respect to a tuple $\mathcal{Y} = (Y_1, Y_2, \ldots, Y_\ell)$ of vertex
subsets, denoted by $D \oplus \mathcal{Y}$, is 
obtained by successively applying inversions with respect to
$Y_1, Y_2, \ldots, Y_\ell$. Such a tuple is called a \emph{decycling family} of $D$
if the resulting graph is acyclic. The minimum cardinality of a
decycling family of $D$ is known as the \emph{inversion number} of $D$ and is
denoted by $\mathrm{inv}(D)$.

Note that the order in which the inversions are applied has no effect on the
final graph. Indeed, an arc $uv$ is present in $D \oplus \mathcal{Y}$ if and only
if either $uv$ is an arc in $D$ and the number of sets in $\mathcal{Y}$ that
contain $\{u,v\}$ is even, or $vu$ is an arc in $D$ and the number of sets in
$\mathcal{Y}$ that contain $\{u,v\}$ is odd. 

Belkhechine et al.~\cite{belkhechine2010inversion} initiated the study of inversion
in 2010 and investigated the inversion number in connection with the Boolean
dimension of graphs. Bang-Jensen, da Silva, and Havet~\cite{bang2022inversion}
studied the inversion number in relation to the cycle transversal number, the
cycle arc-transversal number, and the cycle packing number. They conjectured that
for any two oriented graphs $L$ and $R$,
$\mathrm{inv}(L \rightarrow R) = \mathrm{inv}(L) + \mathrm{inv}(R)$, where
$L \rightarrow R$ denotes the \emph{dijoin} of $L$ and $R$, that is, the oriented
graph obtained from $L$ and $R$ by adding all possible arcs from $L$ to $R$.
This conjecture, known as the \emph{dijoin conjecture}, led to a series of
subsequent works~\cite{behague2025note,wang2024inversion,alon2024invertibility,behague2025case,aubian2025problems}.
Recently, Alon et al.~\cite{alon2024invertibility} and Aubian et
al.~\cite{aubian2025problems} disproved the conjecture. The former also showed
that $\mathrm{inv}(D) \leq (1+o(1))n$ for every directed graph $D$ on $n$ vertices.

Havet, Hörsch, and Rambaud~\cite{havet2024diameter} studied the diameter of a graph
whose vertices correspond to oriented graphs, with an arc from one vertex to
another if the second oriented graph can be obtained from the first by a single
inversion. They established connections between this diameter and several graph
parameters, including the star chromatic number, acyclic chromatic number,
oriented chromatic number, and treewidth. The connection with treewidth was
further investigated by Wang et al.~\cite{wang2024inversion}.

Yuster~\cite{yuster2025tournament} obtained bounds on the inversion number of
tournaments. There are also recent works that consider inversion operations with
respect to different target properties; see, for example,
Duron et al.~\cite{duron2025minimum} and Belkhechine and Ben
Salha~\cite{belkhechine2021making}.


\medskip
\noindent
In the \textsc{$k$-Inversion} problem, the input consists of a directed graph $D$
and an integer $k$, and the objective is to decide whether $\mathrm{inv}(D) \leq
k$.

\medskip
\noindent
\textbf{Known algorithmic results.}
Algorithmic results related to inversion are relatively sparse in
comparison with the extensive non-algorithmic literature. Bang-Jensen, da Silva,
and Havet~\cite{bang2022inversion} proved that \textsc{$k$-Inversion} is
NP-complete for $k=1$, ruling out XP algorithms. Alon et al.~\cite{alon2024invertibility} extended this
result and established NP-completeness for every fixed value of $k$. 
They also showed that the problem is fixed-parameter
tractable when parameterized by $k$ on tournament inputs, using an
iterative-compression-based algorithm. Very recently, Bang-Jensen et
al.~\cite{bang2025making} conducted an algorithmic study of a variant of
\textsc{$k$-Inversion} in which each set used for inversion is required to have
bounded size.
To the best of our knowledge, no other algorithmic results are known for the problem.
We address this gap by presenting three fixed-parameter tractable (FPT) algorithms.

\medskip
\noindent
\textbf{Our results.}
Our first contribution is an FPT algorithm for a generalization of
\textsc{$k$-Inversion} on tournaments.

A decycling family $\mathcal{Y} = (Y_1, Y_2, \ldots, Y_k)$ of a directed graph $D$
induces a \emph{characteristic vector} $\mathbf{u}_i \in \{0,1\}^k$ for each
vertex $u_i$ of $D$. The $j$th coordinate of $\mathbf{u}_i$ is $1$ if
$u_i \in Y_j$ and $0$ otherwise. The \emph{weight} of $\mathbf{u}_i$ with respect
to $\mathcal{Y}$, denoted by $\mathrm{wt}_{\mathcal{Y}}(\mathbf{u}_i)$, is the
number of $1$s in $\mathbf{u}_i$. We omit the subscript when the decycling family
is clear from the context.

We define the \textsc{Weight-Restricted $k$-Inversion on Tournaments} problem as
follows. The input consists of a tournament $T$, an integer $k$, and, for each
vertex $u_i$ of $T$, a set $A_i$ of allowed weights. The objective is to decide
whether there exists a decycling family
$\mathcal{Y}$ of size $k$ such that
$\mathrm{wt}(\mathbf{u}_i) \in A_i$ for every vertex $u_i$ of $T$.

\begin{theorem}
    \label{thm:wrkit}
    \textsc{Weight-Restricted $k$-Inversion on Tournaments} admits an FPT
    algorithm when parameterized by $k$.
\end{theorem}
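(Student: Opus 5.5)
We plan to generalize the iterative-compression algorithm of Alon et al.\ for \textsc{$k$-Inversion} on tournaments, adding a bookkeeping layer that tracks characteristic vectors, so that the weight constraints can be checked at the end.

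The key structural fact is the following. Fix a decycling family $\mathcal{Y}=(Y_1,\dots,Y_k)$ with characteristic vectors $\mathbf{u}_i\in\{0,1\}^k$. For two vertices $u_i,u_j$, the arc between them is reversed iff $\mathbf{u}_i\cdot\mathbf{u}_j\equiv 1\pmod 2$. Hence $T\oplus\mathcal{Y}$ is determined by the vectors alone. Since it is an acyclic tournament, it is a transitive tournament, i.e., a linear order $\sigma$. So a solution is a map $\phi:V(T)\to\{0,1\}^k$ together with a linear order $\sigma$ such that, for every pair $u_i,u_j$, the arc orientation in $T$ XOR $[\phi(u_i)\cdot\phi(u_j)\bmod 2]$ agrees with $\sigma$, and such that $\mathrm{wt}(\phi(u_i))\in A_i$ for all $i$. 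Equivalently, we partition $V(T)$ into at most $2^k$ classes $V_{\mathbf{x}}$. Within each class $V_{\mathbf{x}}$ the arcs are reversed iff $\mathrm{wt}(\mathbf{x})$ is odd, so $T[V_{\mathbf{x}}]$ must already be transitive (or, reversed, transitive). Between classes $V_{\mathbf{x}}$ and $V_{\mathbf{y}}$, the reversal rule is uniform.

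\textbf{Step 1: iterative compression.} Add vertices one at a time. Suppose we hold a solution $\phi$ (with order $\sigma$) for $T-v$ that respects the weights. We seek one for $T$. We use a weighted solution, not an arbitrary one, because the weight condition is hereditary: restricting a solution for $T$ to $T-v$ remains valid.

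\textbf{Step 2: guess and restructure.} Guess the new vector $\phi'(v)$, in $2^k$ ways, with $\mathrm{wt}(\phi'(v))\in A_v$. Also guess a bijection-type correspondence between old classes and new classes. Here we follow Alon et al.\ and guess, for each pair of old/new vectors $(\mathbf{x},\mathbf{y})$, whether $V_{\mathbf{x}}\cap V'_{\mathbf{y}}$ is nonempty. This gives $2^{4^k}$ guesses. Within an old class $V_{\mathbf{x}}$ the old order is a fixed linear order. The new order restricted to the part of $V_{\mathbf{x}}$ with a fixed new vector $\mathbf{y}$ is determined by the parities. So each old class is split into at most $2^k$ subsets, and each subset must be consistent with a fixed, known relation to the old linear order.

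\textbf{Step 3: solve the compressed problem.} The question is then whether each vertex $u$ of $V_{\mathbf{x}}$ can be assigned a $\mathbf{y}$ such that:
\begin{itemize}
\item[(i)] $\mathrm{wt}(\mathbf{y})\in A_u$;
\item[(ii)] every resulting pair orientation is consistent with some global linear order.
\end{itemize}
Condition (i) is purely local: it only shrinks the list of allowed new vectors per vertex, so it folds into the list-assignment version of the subproblem. Our plan is to reduce Step 3 to a problem on a bounded number of chains. Each old class is a transitive chain, and the new classes are chains as well. Consistency (ii) then becomes the requirement that the merged orientation have no directed triangle. We would handle this with a dynamic program that sweeps all $2^k$ old chains simultaneously in order and processes positions along each chain. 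The state records, for each of the boundedly many new classes, the relevant ``frontier'' information, namely the last vertices placed relative to the others. This bounds the number of states by a function of $k$ times $n^{O(2^k)}$.

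\textbf{Main obstacle.} The hardest step is making Step 3 genuinely FPT rather than XP. The naive sweep state is polynomial with exponent depending on $k$. We first try to reuse Alon et al.'s argument directly. Their compression step solves exactly such a list-restricted chain assignment. If that argument already allows an arbitrary allowed vector list per vertex, then the weight restriction adds nothing, and we simply pass the lists $\{\mathbf{y}:\mathrm{wt}(\mathbf{y})\in A_u\}$ to it. So the key check is that every branching or reduction rule in their compression step remains valid when vertices carry arbitrary lists of permitted vectors. If some rule fixes a vertex's vector in a way that ignores its list, that rule needs modification: we would branch instead over the at most $2^k$ list members, keeping the branching factor bounded in $k$.
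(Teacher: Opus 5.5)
There is a genuine gap: the proposal never gives an FPT procedure for the compression step, and that step is where all the difficulty lies. The one concrete algorithm you describe sweeps the $2^k$ old chains simultaneously while recording ``frontier'' vertices. It has $n^{O(2^k)}$ states and is therefore XP, as you note yourself. You then hand the FPT part to the hope that Alon et al.'s compression step is a list-restricted chain assignment whose branching or reduction rules tolerate per-vertex lists. That is a conditional plan rather than an argument. Moreover, branching or reduction rules on chains are not what makes that step FPT, so the ``key check'' you propose cannot be carried out as formulated. Your Step~2 guess of which $V_{\mathbf{x}}\cap V'_{\mathbf{y}}$ are nonempty is also unnecessary.

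The missing idea is that the sweep's state need not remember any vertices at all.

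First take a decycling family $\mathcal{X}$ of bounded size $s$ for the whole input tournament $T_0$. The paper simply runs Alon et al.'s unweighted algorithm once, giving $s=k$; if that call fails, the instance is already a no-instance. In your vertex-by-vertex setting, the old solution for $T_0-v$ plus the two sets $\{v\}\cup N$ and $N$ would also do, where $N$ is the in-neighbourhood of $v$ in $T_0$. This makes $v$ a source and has size $k+2$.

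Let $u_1,\dots,u_n$ be the topological order of the transitive tournament $T=T_0\oplus\mathcal{X}$. Give $u_i$ the concatenated vector $\mathbf{u}_i\in\{0,1\}^{s+k}$ for $(X_1,\dots,X_s,Y_1,\dots,Y_k)$, so that applying this concatenated family to $T$ yields $T_0\oplus\mathcal{Y}$. A tournament is acyclic iff it has no directed triangle. For $a<b<c$, the vertices $u_a,u_b,u_c$ span a directed triangle iff $\mathbf{u}_a\cdot\mathbf{u}_b\equiv\mathbf{u}_b\cdot\mathbf{u}_c\not\equiv\mathbf{u}_a\cdot\mathbf{u}_c \pmod 2$.

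Hence only the set of ordered vector triples realised along this order matters, and that set evolves deterministically. Once three positions are filled, every earlier pair lies in some realised triple. So appending a vector $\mathbf{x}$ turns the current set $B'$ into a set determined by $B'$ and $\mathbf{x}$ alone.

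A dynamic program over positions, with states being subsets of $(\{0,1\}^{s+k})^3$, therefore has at most $2^{2^{3(s+k)}}$ states, independent of $n$. The weight constraints then cost nothing. Position $i$ may only receive vectors whose first $s$ coordinates agree with $\mathcal{X}$ and whose last $k$ coordinates have weight in $A_i$. This is exactly a per-vertex list, which the program handles anyway. So your fallback hope turns out to be right, but a proof of the theorem has to contain this bounded-state argument.
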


The algorithm adapts the iterative-compression-based algorithm of
Alon et al.~\cite{alon2024invertibility} for \textsc{$k$-Inversion on
Tournaments}.

Next, we consider directed graphs whose underlying undirected graph is a block
graph. Using Theorem~\ref{thm:wrkit} as a subroutine, we obtain an FPT algorithm
for \textsc{$k$-Inversion} on this class by dynamic programming over the block
tree of the underlying graph.

\begin{theorem}
    \label{thm:block}
    \textsc{$k$-Inversion}, parameterized by $k$, admits an FPT algorithm when the underlying undirected
    graph of the input digraph is a block graph.
\end{theorem}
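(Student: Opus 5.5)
The plan is a bottom-up dynamic programming over the block tree of the underlying graph $G$. At each block we make one call to the algorithm of Theorem~\ref{thm:wrkit}, so that the only information passed across a cut vertex is the \emph{weight} of its characteristic vector.

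\textbf{Preliminary reductions.} If $D$ contains a digon $uv,vu$, then every inversion either reverses both arcs or neither, so the digon survives. In that case we answer no. Otherwise every block $B$ of $G$ is a clique, so $D[B]$ is a tournament. Every cycle of $G$ lies inside a single block, so every directed cycle of $D\oplus\mathcal{Y}$ lies inside one block. Hence $\mathcal{Y}$ is decycling for $D$ if and only if its restriction to each block $B$ is decycling for $D[B]$. Since padding with empty sets is harmless, we may ask for families of size exactly $k$.

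\textbf{Symmetry lemma.} If $\pi$ is a permutation of $[k]$ and we permute the coordinates of all characteristic vectors by $\pi$ (that is, reorder the family), the resulting digraph is unchanged. This is because an arc is reversed iff the number of sets containing both endpoints, $|\mathbf{u}\wedge\mathbf{v}|$, is odd, and this count is invariant under simultaneous coordinate permutation. Moreover, any two vectors of equal weight are related by some coordinate permutation.

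\textbf{DP over the block tree.} Root the block tree at some block $R$. For a non-root block $B$ with parent cut vertex $c$, let $G_B$ be the union of $B$ and all blocks below it. Define $W(B)\subseteq\{0,\dots,k\}$ as the set of weights $w$ such that $D[G_B]$ has a decycling family of size $k$ in which $c$ has weight $w$. We compute $W(B)$ from the leaves upward. Fix $B$, and for each vertex $v\in B\setminus\{c\}$ set
\[
A_v=\bigcap\{\,W(B'): B' \text{ a child block of } B \text{ attached at } v\,\},
\]
with $A_v=\{0,\dots,k\}$ if $v$ has no child blocks. For each $w\in\{0,\dots,k\}$, we call the algorithm of Theorem~\ref{thm:wrkit} on the tournament $D[B]$ with these sets and $A_c=\{w\}$, and put $w$ in $W(B)$ iff the answer is yes. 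At the root we make one call with all sets $A_v$ defined the same way and accept iff it answers yes. Both the number of blocks and the number of calls are polynomial, so the whole procedure is FPT in $k$.

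\textbf{Correctness and the main obstacle.} Soundness (a global solution yields yes answers) is immediate by restricting and reading off weights. The main step is completeness, which requires gluing. Suppose the call for $B$ returns vectors $\{\mathbf{v}\}_{v\in B}$. For each child block $B'$ at $v$, $\mathrm{wt}(\mathbf{v})\in W(B')$, so by induction there is a family on $G_{B'}$ giving $v$ some vector $\mathbf{v}'$ of the same weight. By the symmetry lemma we apply to the whole of $G_{B'}$ a coordinate permutation sending $\mathbf{v}'$ to $\mathbf{v}$. This preserves acyclicity of $D[G_{B'}]$ and makes the vectors agree at $v$.

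The subtrees $G_{B'}$ meet $B$ and each other only at their attachment vertices. Each is permuted independently, and several child blocks at the same $v$ are all permuted to the same $\mathbf{v}$. So the vectors glue into a single assignment, i.e., a family $\mathcal{Y}$ of size $k$ on $G_B$. By the block decomposition of cycles, $D[G_B]\oplus\mathcal{Y}$ is acyclic. I expect the care here to be in verifying that these independent permutations never conflict, which holds precisely because the block tree structure makes the subtrees pairwise share nothing besides the designated cut vertices.
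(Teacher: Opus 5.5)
Your proposal is correct and follows the paper's proof. It is a bottom-up dynamic program over the block tree that passes only the weight of each cut vertex's characteristic vector. It makes one \textsc{WKIT} call per block and per candidate weight, and it obtains completeness by gluing child families after reordering their sets so that the cut vertex's vector matches. The differences are only presentational. You fold the paper's separate cut-vertex step ($W(c)=\bigcap W(b_j)$) directly into the sets $A_v$. You also state explicitly the reordering invariance and the fact that every directed cycle lies inside a single block, which the paper uses implicitly when it says ``renumber the sets'' and argues that the glued family is decycling.
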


The \textsc{$k$-Inversion} problem can be expressed by an MSO$_2$ formula whose
length depends only on $k$. By Courcelle's
theorem~\cite{courcelle1990monadic}, this implies that \textsc{$k$-Inversion}, parameterized by $k+\mathrm{tw}$,
admits an FPT algorithm. 
Here $\mathrm{tw}$
denotes the treewidth of the underlying undirected graph. Our final result
provides an explicit and more efficient algorithm for this parameterization, which is a dynamic programming over a tree decomposition.

\begin{theorem}
    \label{thm:tw}
    \textsc{$k$-Inversion} admits an algorithm with running time
    $2^{O(\mathrm{tw} (k + \mathrm{tw}))} \cdot n^{O(1)}$, where $\mathrm{tw}$ is the treewidth of
    the underlying undirected graph of the input digraph.
\end{theorem}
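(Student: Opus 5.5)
We will prove Theorem~\ref{thm:tw} by dynamic programming over a nice tree decomposition of the underlying undirected graph $G$ of $D$, of width $\mathrm{tw}$. Such a decomposition, of width $O(\mathrm{tw})$, can be computed in $2^{O(\mathrm{tw})}n^{O(1)}$ time.

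The guiding reformulation is as follows. A decycling family $(Y_1,\dots,Y_k)$ corresponds to a characteristic vector $\mathbf{u}\in\{0,1\}^k$ for every vertex $u$. The resulting digraph $D\oplus\mathcal{Y}$ is determined arc by arc: an arc $uv$ of $D$ is reversed if and only if the inner product $\langle\mathbf{u},\mathbf{v}\rangle$ over $\mathbb{Z}$ is odd. This depends only on the vectors of its endpoints. So the problem asks for an assignment $V(D)\to\{0,1\}^k$ such that the reoriented digraph is acyclic. Acyclicity is equivalent to the existence of a topological ordering.

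For each bag $X_t$, a state will consist of two parts:
\begin{itemize}
\item an assignment $f\colon X_t\to\{0,1\}^k$ of characteristic vectors, giving $2^{k|X_t|}=2^{O(k\,\mathrm{tw})}$ choices;
\item a linear order $\sigma$ of $X_t$, giving $(\mathrm{tw}+1)!=2^{O(\mathrm{tw}\log \mathrm{tw})}$ choices.
\end{itemize}
The table entry $c[t,f,\sigma]$ is true if and only if there exist an extension of $f$ to all vertices $V_t$ of the subtree and a linear order of $V_t$ with the following properties: it restricts to $\sigma$ on $X_t$, and every arc of the subgraph $D_t$ (arcs introduced so far), after reorientation, goes forward in that order. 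The total number of states is $2^{O(\mathrm{tw}(k+\mathrm{tw}))}$, which matches the claimed bound. We keep the cruder $2^{O(\mathrm{tw}^2)}$ estimate rather than $\mathrm{tw}\log\mathrm{tw}$ only because it matches the statement.

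The transitions are the standard ones; we use a variant with explicit introduce-edge nodes.
\begin{itemize}
\item \emph{Introduce vertex $v$:} we guess $f(v)$ and the position of $v$ in $\sigma$, and inherit from the child state with $v$ removed.
\item \emph{Introduce edge $uv$:} we compute the parity of $\langle f(u),f(v)\rangle$ to get the final orientation, and check that it agrees with $\sigma$.
\item \emph{Forget $v$:} we take the OR over all $f(v)$ and all positions of $v$.
\item \emph{Join:} both children must carry the same $(f,\sigma)$.
\end{itemize}
The answer is the OR at the root, whose bag is empty.

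The main obstacle is the correctness of the join and forget steps: partial orders on the two sides must glue into a global acyclic orientation. We argue this directly. Let $L_1$ and $L_2$ be linear orders on $V_{t_1}$ and $V_{t_2}$ that agree with $\sigma$ on $X_t$. Since $V_{t_1}\cap V_{t_2}=X_t$, we take the union of the two relations $L_1$ and $L_2$ and claim it is acyclic.

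Suppose it contains a cycle. Alternating between the two sides, the cycle must pass through separator vertices in $X_t$. Replacing each maximal segment inside one side by the comparison between its endpoints, all of which lie in $X_t$, yields a cycle in $\sigma$, which is a contradiction. Any topological extension of this union is then a valid order. Conversely, restricting a global topological order to each side gives valid child witnesses.

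For forget nodes, soundness is immediate. Each arc is introduced exactly once, at a node where both endpoints are in the bag, so every arc's constraint is checked with the correct vectors.

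Each transition costs polynomial time in the table size. The overall running time is therefore $2^{O(\mathrm{tw}(k+\mathrm{tw}))}\cdot n^{O(1)}$.
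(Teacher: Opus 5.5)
Your proof is correct, and it takes a genuinely different route from the paper.

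\textbf{The paper's route.} The table $c(t,P,\mathcal{S})$ stores two things: the bag restriction $\mathcal{S}=\hat{\mathcal{S}}\cap X_t$ (equivalent to your $f$), and the exact reachability relation $P$ among bag vertices in $D[V_t]\oplus\hat{\mathcal{S}}$. Acyclicity is then propagated through auxiliary graphs. At an introduce node it uses $A_{t,P'',\mathcal{S}}$, which is the child's reachability relation plus the reoriented arcs at the new vertex $w$; this requires a DAG check and a case analysis on paths through $w$. At a join node it enumerates all pairs $Q_1,Q_2$ and takes the transitive closure $B_{t,Q_1,Q_2}$ of $Q_1\cup Q_2$.

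\textbf{Your route.} You instead store the restriction $\sigma$ of a topological order. The orientation of each arc is fixed locally by the parity of $\langle f(u),f(v)\rangle$, so the only global issue is gluing orders across the separator. Your segment-collapsing argument settles this: a cycle in $L_1\cup L_2$ not inside one side switches sides only at vertices of $V_{t_1}\cap V_{t_2}=X_t$, so transitivity of $L_i$ turns it into a cycle in $\sigma$. Introduce-vertex and join then become plain table lookups. At an introduce-vertex node there is in fact nothing to guess: $v$ is isolated in $D_t$, so it can be inserted into any witnessing order of $V_{t'}$ at the position dictated by $\sigma$.

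\textbf{What each approach buys.} Your route is simpler and gives a stronger bound. There are only $(\mathrm{tw}+1)!=2^{O(\mathrm{tw}\log\mathrm{tw})}$ orders on a bag, versus up to $2^{|X_t|^2}$ candidate relations $P$. So you actually obtain $2^{O(\mathrm{tw}(k+\log\mathrm{tw}))}\cdot n^{O(1)}$. The paper's exact-reachability state carries more information than acyclicity needs, and that is what puts $\mathrm{tw}^2$ in its exponent. Its one convenience is that it works with induced subgraphs $D[V_t]$ and ordinary introduce nodes. You instead rely on introduce-edge nodes to check each arc exactly once; this is a standard refinement computable in polynomial time.
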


Moreover, with standard techniques, our algorithms can be made constructive and output a decycling family for yes-instances.

Note that it suffices to prove our results for oriented graphs. Indeed, if a
directed graph $D$ contains a self-loop, then no decycling family exists for
$D$. Similarly, if $D$ contains a pair of opposite arcs $uv$ and $vu$, then $D$
admits no decycling family. Moreover, if there are multiple parallel arcs from
$u$ to $v$ in $D$, then removing all but one such arc yields an equivalent
instance. Therefore, throughout the remainder of the paper, we restrict our
attention to oriented input graphs.

We prove Theorem~\ref{thm:wrkit} in Section~\ref{sec:wkit}, Theorem~\ref{thm:block} 
in Section~\ref{sec:block}, and Theorem~\ref{thm:tw} in Section~\ref{sec:tw}. 
We follow standard graph-theoretic and set-theoretic notations.
For background on the parameterized algorithmic techniques and related concepts used in this paper, we refer the reader to the book by Cygan et al.~\cite{cygan2015parameterized}.
The number of vertices in the input graph is always denoted by $n$.
By $[t]$, we denote the set $\{1,2,\ldots,t\}$.
For technical reasons, we allow a decycling family of a subgraph of a directed graph to include vertices outside the subgraph; however, such vertices do not induce any arc changes within the subgraph.
Notation and terminology not introduced in this section are defined just before their first use.

\section{\textsc{Weight-Restricted $k$-Inversion on Tournaments}}
\label{sec:wkit}

Recall that in \textsc{Weight-Restricted $k$-Inversion on Tournaments} (WKIT),
we are given a tournament $T_0$ on vertex set
$\{u_1,u_2,\ldots,u_n\}$, an integer $k$, and a tuple $\mathcal{A} = (A_1, A_2, \ldots, A_n)$ with
$A_i\subseteq\{0,1,\ldots,k\}$. The task is to decide whether there exists a
decycling family $\mathcal{Y}$ of size $k$ such that
$\mathrm{wt}(\mathbf{u}_i)\in A_i$ for every $i\in[n]$. We call these
\textit{weight constraints}.

We give an FPT algorithm for the problem, parameterized by $k$, by adapting the
iterative-compression algorithm of Alon et al.~\cite{alon2024invertibility} for
\textsc{$k$-Inversion on Tournaments}.

\begin{proposition}[\cite{alon2024invertibility}]
    \label{pro:kit}
    \textsc{$k$-Inversion on Tournaments}, parameterized by $k$, admits an FPT algorithm.
    Moreover, given a tournament $T_0$ and an integer $k$, the algorithm
    returns a decycling family of size $k$, if one exists.
\end{proposition}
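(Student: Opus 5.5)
This proposition is quoted from Alon et al.~\cite{alon2024invertibility}. If I had to reprove it, I would follow their approach of iterative compression combined with an algebraic reformulation of decycling families. For the reformulation, let $\mathcal{Y}=(Y_1,\dots,Y_k)$ have characteristic vectors $\mathbf{u}_i\in\{0,1\}^k$. By the observation in the introduction, the arc between $u_i$ and $u_j$ is reversed in $T_0\oplus\mathcal{Y}$ exactly when the inner product $\langle \mathbf{u}_i,\mathbf{u}_j\rangle$ over $\mathbb{F}_2$ equals $1$. Moreover, $T_0\oplus\mathcal{Y}$ is acyclic if and only if it is a transitive tournament, that is, it is given by a linear order $\sigma$ of the vertices. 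So a solution amounts to a labelling $V\to\{0,1\}^k$ together with a linear order $\sigma$, such that every arc of $T_0$ agrees with $\sigma$ exactly when the inner product of its endpoints' labels is $0$. In particular, each label class induces a transitive subtournament of $T_0$ whose order is the restriction of $\sigma$.

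For the iterative compression, I would order the vertices as $u_1,\dots,u_n$ and let $T_i=T_0[\{u_1,\dots,u_i\}]$. Each $T_i$ with $i\le k$ is trivially handled; for example, $k$ copies of singletons give a decycling family. If some $T_i$ has no family of size $k$, neither does $T_0$, since inversion commutes with taking induced subgraphs. Given a family $\mathcal{Y}'$ for $T_i$, adding $u_{i+1}$ with the all-zero vector gives a labelling of $T_{i+1}$ that is ``almost'' a solution: the only possible cycles pass through $u_{i+1}$. For the compression step, I would take this old labelling, which partitions $V(T_{i+1})$ into at most $2^k+1$ old classes. Each old class is transitive in $T_0$ and is linearly ordered by the old order $\sigma'$. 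I would then branch over a bounded amount of structural information about the sought new solution. Specifically, for each old class I would guess which new labels occur in it, and how the corresponding blocks interleave in the new order $\sigma$ relative to the blocks of other classes. The number of such guesses is bounded by a function of $k$ alone.

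The main obstacle is to finish each branch in polynomial time, namely to assign new labels to the vertices of each old class consistently with the guess. What I would try first is to show that within an old class $C$, which is totally ordered by $T_0$, one may assume the vertices receiving a fixed new label form few intervals. Ideally the number of intervals would be bounded by a function of $k$, via an exchange argument on $\sigma$. Then the branch would only need to guess the interval endpoints, or compute them by a simple DP along the order of $C$. Every remaining constraint is an arc between two vertices whose new labels are then known, so it can be checked directly. Once a consistent labelling is found, reading off $Y_j=\{u: \text{the $j$th coordinate of its label is } 1\}$ gives the decycling family, which yields the constructive part of the statement. The total running time is $f(k)\cdot n^{O(1)}$.
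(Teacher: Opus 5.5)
\textbf{There is a genuine gap.} Your inner-product reformulation and the iterative-compression skeleton are fine. However, the compression step, which is the entire substance of the FPT claim, is left open. The structural lemma you propose to close it with is false: within an old class, the vertices carrying a fixed new label need not form boundedly many intervals.

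Here is a counterexample. Take $T_0$ on $c_1,\dots,c_m,z$ with $c_i\to c_j$ for $i<j$, $c_i\to z$ for odd $i$, and $z\to c_i$ for even $i$, and let $k=2$. Let $z$ be the last vertex in your compression order, and use the trivial (all-empty) family for the transitive $T_0-z$, so that $C=\{c_1,\dots,c_m\}$ is a single old class.
\begin{itemize}
\item A solution exists: invert $\{z\}\cup E$ and then $E$, where $E=\{c_i : i \text{ even}\}$.
\item In \emph{every} solution, no three consecutive $c_i,c_{i+1},c_{i+2}$ share a new label $\mathbf{a}$. Suppose they did, and set $\epsilon_1=\langle\mathbf{a},\mathbf{a}\rangle$ and $\epsilon_2=\langle\mathbf{z},\mathbf{a}\rangle$. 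Then each triangle $\{c_j,c_{j+1},z\}$ with $j\in\{i,i+1\}$ has its $C$-arc flipped according to $\epsilon_1$ and both $z$-arcs flipped according to $\epsilon_2$.
\item One of these two triangles is a directed triangle in $T_0$, and it stays cyclic if $\epsilon_1=\epsilon_2$. The other is transitive in $T_0$, and it becomes cyclic if $\epsilon_1\neq\epsilon_2$. Either way a cycle remains.
\end{itemize}
Hence every solution splits $C$ into at least $m/2$ maximal runs, so some label forms at least $m/8$ runs. No guess of blocks and interleavings of size bounded in $k$ can describe the solution.

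The missing idea comes from Alon et al.; the paper only cites the proposition, but Section~\ref{sec:wkit} reproduces the technique. With all $A_i=\{0,\dots,k\}$, that section is exactly the compression step you need. The key is to avoid reasoning about the unknown new order $\sigma$ altogether.
\begin{enumerate}
\item First upgrade your ``almost'' solution to a genuine decycling family $\mathcal{X}$ of $T_{i+1}$ with $s\le k+2$ sets. Append the inversions $Z\cup\{u_{i+1}\}$ and then $Z$, where $Z$ is the in-neighbourhood of $u_{i+1}$ in $T_{i+1}\oplus\mathcal{Y}'$. This makes $u_{i+1}$ a source.
\item Fix the topological order of the transitive tournament $T=T_{i+1}\oplus\mathcal{X}$. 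By Lemma~\ref{lem:bad-triples-transitivity}, $T\oplus(\mathcal{X},\mathcal{Y})$ is transitive if and only if no bad triple of combined vectors in $\{0,1\}^{s+k}$ occurs in this order.
\item Run a dynamic programme along this order whose state is the set of vector triples realized so far. It has at most $2^{2^{3(s+k)}}$ states, so it runs in FPT time.
\end{enumerate}
This programme lets labels sit at arbitrary positions along the order, which the example shows is necessary.

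Two smaller slips should also be fixed.
\begin{itemize}
\item Singletons invert no arc, so $k$ singletons decycle $T_i$ only if $T_i$ is already acyclic. Simply start the compression at $T_1$.
\item A label class of odd weight is ordered in $T_0$ by the \emph{reverse} of $\sigma$, not by its restriction.
\end{itemize}
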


We first define the following auxiliary \emph{compression} version of the
problem.

\medskip
\noindent
\textsc{Compression Weight-Restricted $k$-Inversion on Tournaments} (CWKIT).
An instance of the problem consists of:
\begin{itemize}
    \item a tournament $T_0$ on $n$ vertices $u_1, u_2, \ldots, u_n$,
    \item an integer $k$ (the parameter),
    \item a tuple $\mathcal{A} = (A_1, A_2, \ldots, A_n)$ such that
    $A_i\subseteq\{0,1,\ldots,k\}$ for all $i\in[n]$, and
    \item a decycling family $\mathcal{X}$ of size $s$ for $T_0$, where
    $s=f(k)$ for some computable function $f$.
\end{itemize}
The objective is to decide whether there exists a decycling family
$\mathcal{Y}$ of size $k$ for $T_0$ such that
$\mathrm{wt}(\mathbf{u}_i)\in A_i$ for every $i\in[n]$.

\medskip
We now explain why it suffices to solve the compression version in FPT time.

\begin{lemma}
    \label{lem:iterative-compression}
    If \textsc{CWKIT} admits an FPT algorithm running in time
    $g(k)\cdot n^{O(1)}$, then \textsc{WKIT} can be solved in time
    $g'(k)\cdot n^{O(1)}$ for some computable function $g'$.
\end{lemma}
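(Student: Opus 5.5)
The natural route is to produce the required decycling family $\mathcal{X}$ directly, rather than through a vertex-by-vertex compression. First run the algorithm of Proposition~\ref{pro:kit} on $(T_0,k)$, ignoring the weight constraints. If it reports that $T_0$ has no decycling family of size $k$, then the WKIT instance is a no-instance, since every weight-restricted solution is in particular a decycling family of size $k$; we answer \emph{no}. Otherwise the algorithm returns a decycling family $\mathcal{X}$ of size $k$. This is a valid compression input with $s=f(k)=k$. We then call the assumed FPT algorithm for CWKIT on $(T_0,k,\mathcal{A},\mathcal{X})$ and return its answer.

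Correctness rests on one observation: CWKIT and WKIT ask exactly the same question about $(T_0,k,\mathcal{A})$. The extra family $\mathcal{X}$ in CWKIT only serves as a promise or certificate and does not change the set of admissible $\mathcal{Y}$. So the two instances are equivalent whenever $\mathcal{X}$ is a genuine decycling family of size at most $f(k)$, which holds here.

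One small point needs care. Proposition~\ref{pro:kit} decides whether a family of size \emph{at most} $k$ exists, while WKIT requires size exactly $k$. Padding with empty sets resolves this. Adding empty sets leaves $D\oplus\mathcal{Y}$ unchanged and does not change the weight of any vertex, so the two notions coincide both for the unconstrained problem and for the weight-restricted one. The same padding makes $\mathcal{X}$ have size exactly $s=k$ if the compression problem insists on that.

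The running time is $h(k)\cdot n^{O(1)}$ for the call to Proposition~\ref{pro:kit}, plus $g(k)\cdot n^{O(1)}$ for the CWKIT call, so $g'(k)=h(k)+g(k)$ works. The only real subtlety is making sure the compression input is well defined, which needs the constructive part of Proposition~\ref{pro:kit}, namely that it outputs a family and not just yes/no.

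If one preferred a self-contained iterative compression in place of that black box, one could add vertices one at a time, $T_i=T_0[\{u_1,\dots,u_i\}]$. A solution $\mathcal{Y}$ of size $k$ for $T_{i-1}$ extends to a family of size $k+1$ for $T_i$ by adding the set $\{u_i\}\cup N^+(u_i)$: inverting it flips exactly the arcs inside $N^+(u_i)$ together with the arcs at $u_i$. That step is messier than it looks, which is why the direct approach is preferable. The easier way is to add the singleton-free trick with $s=2k+1$, but the direct approach above avoids all of this.
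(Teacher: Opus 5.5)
Your main argument is correct and is the paper's proof: run Proposition~\ref{pro:kit} to obtain a decycling family of size $k$ (or reject), then pass it as $\mathcal{X}$ with $s=k$ to the CWKIT algorithm. The closing aside is not needed, and its claim is wrong: inverting only $\{u_i\}\cup N^+(u_i)$ also reverses the arcs inside $N^+(u_i)$, which can create a directed triangle, so you would also need to invert $N^+(u_i)$, giving size $k+2$. Nothing in your proof depends on that aside.
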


\begin{proof}
Let $T_0$ be the input tournament.
Using Proposition~\ref{pro:kit}, we decide in FPT time whether $T_0$ admits a decycling family of size $k$.
If no such family exists, then clearly $(T_0, k, \mathcal{A})$ is a no-instance.

Suppose that $T_0$ admits a decycling family $\mathcal{Z}$ as obtained by Proposition~\ref{pro:kit}. 
Then $(T_0,k,\mathcal{A},\mathcal{Z})$ is a valid instance of
\textsc{CWKIT}. Now, running the assumed FPT algorithm for \textsc{CWKIT}
decides whether there exists a decycling family
$\mathcal{Y}$ of size $k$ for $T_0$ satisfying all weight constraints.
\end{proof}

Hence, from now on we focus on the compression version \textsc{CWKIT}.


Fix an instance $(T_0,k,\mathcal{A},\mathcal{X})$ of \textsc{CWKIT}. 
Assume that $|\mathcal{X}| = s$.
Let $T = T_0\oplus \mathcal{X}$. 
We assume, without loss of generality, that $u_1,u_2,\ldots,u_n$ is a topological ordering of $T$.
Further assume that $A_i$ is nonempty for each $i\in [n]$. Indeed, if $A_i = \emptyset$ for any $i\in [n]$,
then the instance is a no-instance.

Let $\mathcal{Y}=(Y_1,\ldots,Y_k)$ be any decycling family for $T_0$ and consider the concatenated sequence
\[
\mathcal{W}:=(X_1,\ldots,X_s,\ Y_1,\ldots,Y_k).
\]
Applying $\mathcal{W}$ to $T$ yields the same transitive tournament as applying $(Y_1,\ldots,Y_k)$ to $T_0$.
This is because $T\oplus \mathcal{W} = (T\oplus \mathcal{X})\oplus \mathcal{Y} = T_0\oplus \mathcal{Y}$,
where the second equality follows from the fact that if $T_0\oplus \mathcal{X} = T$, then $T\oplus \mathcal{X} = T_0$.
It is convenient to reason about $\mathcal{W}$ rather than $\mathcal{Y}$ directly.

For a vertex $u_i$, let $\mathbf{u}_i\in\{0,1\}^{s+k}$ be its characteristic vector with respect to $\mathcal{W}$.

Let
\[
J := \{0,1\}^{s+k},
\qquad |J|=2^{s+k},
\]
whose elements we interpret as all possible $(s+k)$-length characteristic vectors.

For $1\leq i\leq n$, define $J_i$ as the set of $(s+k)$-length characteristic vectors $x\in J$ respecting $\mathcal{X}$ (i.e., $j$\textsuperscript{th} element of $x$ is 1 if and only if $u_i$ appears in $X_j$) such that
$\mathrm{wt}(\mathbf{x}^{\mathrm{last}(k)})\in A_i$, where $\mathbf{x}^{\mathrm{last}(k)}$ 
denotes the vector comprising the last $k$ elements in $\mathbf{x}$.
Clearly, only vectors from $J_i$ are potential candidates to be assigned to the vertex $u_i$.
Since we assumed that all $A_i$ are nonempty, it follows that all $J_i$ are nonempty.

For $1\leq t\leq n$, and for any assignment $\mathbf{u}=(\mathbf{u}_1,\ldots,\mathbf{u}_t)$ with $\mathbf{u}_i\in J_i$, define
\[
B(\mathbf{u}) := \{(\mathbf{u}_a,\mathbf{u}_b,\mathbf{u}_c)\mid 1\le a<b<c\le t\},
\]
the set of all (ordered-by-position) triples of characteristic vectors appearing along the order $u_1,\ldots,u_t$.

We say that a triple $(\mathbf{a},\mathbf{b},\mathbf{c})\in J^3$ is \emph{bad} if
\[
\mathbf{a}\cdot \mathbf{b} = \mathbf{b}\cdot \mathbf{c}
\quad\text{and}\quad
\mathbf{a}\cdot \mathbf{b} \neq \mathbf{a}\cdot \mathbf{c}.
\]

The following lemma is observed in~\cite{alon2024invertibility}.

\begin{lemma}\label{lem:bad-triples-transitivity}
Let $T$ be a transitive tournament with topological order $u_1,\ldots,u_n$.
Fix any family $\mathcal{W}$ and let $\mathbf{u}_i\in\{0,1\}^{|\mathcal{W}|}$ be the corresponding characteristic vectors.
Let $T_{\mathcal{W}} = T\oplus \mathcal{W}$.
Then $T_{\mathcal{W}}$ is transitive if and only if $B(\mathbf{u})$ contains no bad triple.
\end{lemma}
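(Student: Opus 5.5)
Write $\mathbf{a}\cdot\mathbf{b}$ for the inner product over $\mathbb{F}_2$, which is the parity of the number of sets of $\mathcal{W}$ containing both endpoints. The proof rests on one observation: for $a<b$, the arc between $u_a$ and $u_b$ in $T_{\mathcal{W}}$ goes $u_a\to u_b$ if $\mathbf{u}_a\cdot\mathbf{u}_b=0$ and $u_b\to u_a$ if $\mathbf{u}_a\cdot\mathbf{u}_b=1$. This holds because $T$ is transitive with $u_a\to u_b$, and each set containing both $u_a$ and $u_b$ flips the arc once.

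The argument then uses the standard fact that a tournament is transitive if and only if it contains no directed triangle (a $3$-cycle). Indeed, any tournament with a cycle contains a directed triangle: take a shortest cycle $v_1v_2\cdots v_\ell v_1$ with $\ell\ge 4$. The arc between $v_1$ and $v_3$ either shortcuts the cycle or, together with $v_1v_2$ and $v_2v_3$, forms a triangle. Since $T_{\mathcal{W}}$ is a tournament, it is transitive (equivalently, acyclic) exactly when no triple $a<b<c$ induces a directed $3$-cycle.

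Next, classify the triples $a<b<c$. Only two orientations of the triangle on $u_a,u_b,u_c$ are cyclic:
\begin{itemize}
\item $u_a\to u_b\to u_c\to u_a$, i.e., $\mathbf{u}_a\cdot\mathbf{u}_b=0$, $\mathbf{u}_b\cdot\mathbf{u}_c=0$, $\mathbf{u}_a\cdot\mathbf{u}_c=1$;
\item $u_a\to u_c\to u_b\to u_a$, i.e., $\mathbf{u}_a\cdot\mathbf{u}_b=1$, $\mathbf{u}_b\cdot\mathbf{u}_c=1$, $\mathbf{u}_a\cdot\mathbf{u}_c=0$.
\end{itemize}
These are exactly the cases $\mathbf{u}_a\cdot\mathbf{u}_b=\mathbf{u}_b\cdot\mathbf{u}_c\neq\mathbf{u}_a\cdot\mathbf{u}_c$. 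In every other case the three arcs are consistent with some linear order, so the triangle is transitive. For example, if $\mathbf{u}_a\cdot\mathbf{u}_b\ne \mathbf{u}_b\cdot\mathbf{u}_c$, then $u_b$ is a source or a sink of the triangle.

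Finally, combine the two steps. $T_{\mathcal{W}}$ has a directed triangle if and only if some triple $a<b<c$ gives a bad triple $(\mathbf{u}_a,\mathbf{u}_b,\mathbf{u}_c)\in B(\mathbf{u})$. Together with the triangle criterion, this proves the equivalence. The only step that takes any care is the case check on the eight orientations, which is routine.
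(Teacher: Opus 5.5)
Your proof is correct and follows essentially the same route as the paper. Both arguments reduce transitivity to the absence of directed triangles, then check for each $a<b<c$ that the two cyclic orientations correspond exactly to $\mathbf{u}_a\cdot\mathbf{u}_b=\mathbf{u}_b\cdot\mathbf{u}_c\neq\mathbf{u}_a\cdot\mathbf{u}_c$. You also prove the triangle criterion and state explicitly that the dot products are taken over $\mathbb{F}_2$; the paper's definition of a bad triple needs that parity reading but leaves it implicit.
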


\begin{proof}
We use the standard characterization of transitive tournaments: a tournament is transitive if and only if it has no directed $3$-cycle.

Fix indices $a<b<c$. In the transitive tournament $T$, the arcs among $\{u_a,u_b,u_c\}$ are
$u_a\to u_b$, $u_a\to u_c$, and $u_b\to u_c$.
The triple $(u_a,u_b,u_c)$ forms a directed $3$-cycle in $T_{\mathcal{W}}$ precisely when 
either $u_au_b$ and $u_bu_c$ are flipped (and $u_au_c$ is not flipped), or when $u_au_c$ is flipped
(and $u_au_b$ and $u_bu_c$ are not flipped).
This happens exactly when the above conditions are satisfied.
Hence, $T_{\mathcal{W}}$ contains a directed triangle if and only if there exists $a<b<c$ such that
$(\mathbf{u}_a,\mathbf{u}_b,\mathbf{u}_c)$ is a bad triple.
Equivalently, $T_{\mathcal{W}}$ is transitive if and only if no bad triple appears in $B(\mathbf{u})$.
\end{proof}

For $t\geq 3$, let
\[
\mathcal{B}_t
=
\big\{
B(\mathbf{u}) \mid
\mathbf{u} = (\mathbf{u}_1, \mathbf{u}_2, \ldots, \mathbf{u}_t),
\ \mathbf{u}_i\in J_i \text{ for } i\in [t]
\big\}.
\]


Computing $\mathcal{B}_t$ directly from the definition is costly, as there can be up to
$2^{s+k}$ choices for each $\mathbf{u}_i$.
However, as observed in~\cite{alon2024invertibility}, there are only at most
$2^{2^{3(s+k)}}$ distinct sets in $\mathcal{B}_t$, even when $t = n$.
Moreover, these sets can be computed much more efficiently as follows.

For $t\geq 4$, given a set $B'\in\mathcal{B}_{t-1}$ and a choice
$\mathbf{x}\in J_t$ for $\mathbf{u}_t$, define
\begin{equation}\label{eq:transition}
S(B',\mathbf{x})
\;:=\;
B'\ \cup\ \bigcup_{(\mathbf{v}_1,\mathbf{v}_2,\mathbf{v}_3)\in B'}
\Big\{(\mathbf{v}_1,\mathbf{v}_2,\mathbf{x}),\,
      (\mathbf{v}_1,\mathbf{v}_3,\mathbf{x}),\,
      (\mathbf{v}_2,\mathbf{v}_3,\mathbf{x})\Big\}.
\end{equation}
Intuitively, when appending a new position $t$ carrying vector $\mathbf{x}$,
every previous triple generates three new triples involving $\mathbf{x}$.

\begin{lemma}\label{lem:cwkit:transition-correct}
For any $t\ge 4$, any $\mathbf{u}'\in J^{t-1}$, and any $\mathbf{x}\in J_t$,
let $\mathbf{u}=(\mathbf{u}',\mathbf{x})$. Then
\[
B(\mathbf{u}) = S\big(B(\mathbf{u}'),\mathbf{x}\big).
\]
\end{lemma}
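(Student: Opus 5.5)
The plan is to prove the set equality by splitting the triples of $B(\mathbf{u})$ according to whether their last position is $t$ or not. Write $\mathbf{u}=(\mathbf{u}_1,\ldots,\mathbf{u}_{t-1},\mathbf{u}_t)$ with $\mathbf{u}_t=\mathbf{x}$. By definition, $B(\mathbf{u})$ is the set of vector triples $(\mathbf{u}_a,\mathbf{u}_b,\mathbf{u}_c)$ with $1\le a<b<c\le t$. The triples with $c\le t-1$ are, by definition, exactly $B(\mathbf{u}')$. The remaining triples are those with $c=t$, namely
\[
N:=\{(\mathbf{u}_a,\mathbf{u}_b,\mathbf{x})\mid 1\le a<b\le t-1\}.
\]
So $B(\mathbf{u})=B(\mathbf{u}')\cup N$. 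Comparing with~\eqref{eq:transition}, it remains to show that $N$ equals the union $M$ of the sets $\{(\mathbf{v}_1,\mathbf{v}_2,\mathbf{x}),(\mathbf{v}_1,\mathbf{v}_3,\mathbf{x}),(\mathbf{v}_2,\mathbf{v}_3,\mathbf{x})\}$ over $(\mathbf{v}_1,\mathbf{v}_2,\mathbf{v}_3)\in B(\mathbf{u}')$.

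For $M\subseteq N$: take any $(\mathbf{v}_1,\mathbf{v}_2,\mathbf{v}_3)\in B(\mathbf{u}')$. Fix one witnessing choice of positions $a<b<c\le t-1$ with $\mathbf{v}_1=\mathbf{u}_a$, $\mathbf{v}_2=\mathbf{u}_b$, $\mathbf{v}_3=\mathbf{u}_c$. Each of the three generated triples is then $(\mathbf{u}_p,\mathbf{u}_q,\mathbf{x})$ for a pair $p<q$ taken from $\{a,b,c\}$. Such a triple lies in $N$.

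For $N\subseteq M$: take positions $a<b\le t-1$. Here the hypothesis $t\ge 4$ is used: it gives $t-1\ge 3$, so there is a third position $d\in[t-1]\setminus\{a,b\}$. Sort $\{a,b,d\}$ increasingly to get a triple of positions, whose vector triple lies in $B(\mathbf{u}')$. There are three cases, $d<a$, $a<d<b$ and $d>b$. In these cases the pair $(\mathbf{u}_a,\mathbf{u}_b)$ occupies coordinates $(2,3)$, $(1,3)$ and $(1,2)$ of that triple, respectively. In each case $(\mathbf{u}_a,\mathbf{u}_b,\mathbf{x})$ is one of the three triples generated in~\eqref{eq:transition}, so it lies in $M$.

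The only delicate point is that $B(\cdot)$ is a set of vector triples rather than of position triples, so distinct positions may carry equal vectors. The argument avoids this issue because membership is always witnessed by choosing explicit positions, and the order of the positions determines the order within each triple. The existence of the third index $d$ is the one place where $t\ge 4$ is needed; this is the main obstacle, and it explains why the recursion starts from $\mathcal{B}_3$.
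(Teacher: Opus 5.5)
Your proposal is correct and follows essentially the same argument as the paper. Both proofs split off the triples ending at position $t$, obtain the inclusion $B(\mathbf{u})\subseteq S(B(\mathbf{u}'),\mathbf{x})$ from a third index $d\in[t-1]\setminus\{a,b\}$ (which exists since $t-1\ge 3$) using the same three-case analysis, and get the reverse inclusion by reading positions off a generating triple in $B(\mathbf{u}')$. Two small points in your favour: fixing explicit witnessing positions handles the vector-versus-position subtlety more carefully than the paper's wording, and you rightly note that only $t-1\ge 3$ is needed here, not the nonemptiness of the $J_i$ that the paper mentions.
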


\begin{proof}
Let $\mathbf{u}' = (\mathbf{u}_1, \mathbf{u}_2, \ldots, \mathbf{u}_{t-1})$. Then $\mathbf{u} = (\mathbf{u}_1, \mathbf{u}_2, \ldots, \mathbf{u}_{t-1}, \mathbf{u}_t = \mathbf{x})$.
Let $B' = B(\mathbf{u}')$.
Let $(\mathbf{u}_a, \mathbf{u}_b, \mathbf{u}_c)$ be any triple in $B(\mathbf{u})$.
Note that $1\leq a < b < c \leq t$ and
$\mathbf{u}_a\in J_a$, $\mathbf{u}_b\in J_b$, and $\mathbf{u}_c\in J_c$.

If $c \leq t-1$, then
$(\mathbf{u}_a, \mathbf{u}_b, \mathbf{u}_c)\in B'\subseteq S(B', \mathbf{x})$.
Hence, assume that $c = t$, that is, $\mathbf{u}_c = \mathbf{x}$.
Then $1\leq a < b \leq t-1$.
Since $t-1\geq 3$ and each $J_i$ is nonempty, there exists an index
$\ell \in [t-1]\setminus\{a,b\}$ such that either $\ell < a$, or $a < \ell < b$,
or $b < \ell$.

Assume that $\ell < a$.
Then, by definition, $B'$ contains the triple
$(\mathbf{u}_\ell, \mathbf{u}_a, \mathbf{u}_b)$.
Consequently, $S(B', \mathbf{x})$ contains the triple
$(\mathbf{u}_a, \mathbf{u}_b, \mathbf{x})$, as required.
The other two cases can be proved analogously.

For the reverse direction, assume that $S(B', \mathbf{x})$ contains a triple
$(\mathbf{u}_a, \mathbf{u}_b, \mathbf{u}_c)$.
If this triple belongs to $B'$, then, since $B'\subseteq B(\mathbf{u})$,
it is contained in $B(\mathbf{u})$.
Otherwise, the triple was added to $S(B', \mathbf{x})$ due to some triple in $B'$
in which both $\mathbf{u}_a$ and $\mathbf{u}_b$ appear, with $\mathbf{u}_a$
preceding $\mathbf{u}_b$.
In this case, $\mathbf{u}_c=\mathbf{x}$ and
$(\mathbf{u}_a, \mathbf{u}_b, \mathbf{x})$ belongs to $B(\mathbf{u})$ by definition.
\end{proof}

Now, for $t\geq 4$, we compute $\mathcal{B}_t$ as
\[
\bigcup_{\mathbf{x}\in J_t,\; B'\in \mathcal{B}_{t-1}} S(B', \mathbf{x}).
\]
It remains to prove that the definition of $\mathcal{B}_t$ matches its computation.

\begin{lemma}
    \label{lem:cwkit:bt}
    For $n\geq t\geq 4$,
    \[
    \mathcal{B}_t = \bigcup_{\mathbf{x}\in J_t,\; B'\in \mathcal{B}_{t-1}} S(B', \mathbf{x}).
    \]
\end{lemma}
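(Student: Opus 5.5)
We prove the equality by showing both inclusions. Each inclusion reduces to Lemma~\ref{lem:cwkit:transition-correct} once we match the assignments for positions $1,\ldots,t$ with pairs consisting of an assignment for positions $1,\ldots,t-1$ and a vector for position $t$.

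For the inclusion ``$\subseteq$'', take any $B\in\mathcal{B}_t$. By definition $B=B(\mathbf{u})$ for some $\mathbf{u}=(\mathbf{u}_1,\ldots,\mathbf{u}_t)$ with $\mathbf{u}_i\in J_i$ for all $i\in[t]$. Let $\mathbf{u}'=(\mathbf{u}_1,\ldots,\mathbf{u}_{t-1})$ and $\mathbf{x}=\mathbf{u}_t\in J_t$. Since $\mathbf{u}_i\in J_i$ for every $i\in[t-1]$, and $t-1\geq 3$ so that $\mathcal{B}_{t-1}$ is defined, we have $B(\mathbf{u}')\in\mathcal{B}_{t-1}$. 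Moreover $\mathbf{u}'\in J^{t-1}$ and $\mathbf{u}=(\mathbf{u}',\mathbf{x})$, so Lemma~\ref{lem:cwkit:transition-correct} gives $B=B(\mathbf{u})=S(B(\mathbf{u}'),\mathbf{x})$. This set appears in the union on the right-hand side, with $B'=B(\mathbf{u}')$.

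For the inclusion ``$\supseteq$'', take $\mathbf{x}\in J_t$ and $B'\in\mathcal{B}_{t-1}$. By definition of $\mathcal{B}_{t-1}$ there is an assignment $\mathbf{u}'=(\mathbf{u}_1,\ldots,\mathbf{u}_{t-1})$ with $\mathbf{u}_i\in J_i$ and $B'=B(\mathbf{u}')$. Set $\mathbf{u}=(\mathbf{u}',\mathbf{x})$. Then $\mathbf{u}_i\in J_i$ for all $i\in[t]$, so $B(\mathbf{u})\in\mathcal{B}_t$. By Lemma~\ref{lem:cwkit:transition-correct}, $B(\mathbf{u})=S(B',\mathbf{x})$, hence $S(B',\mathbf{x})\in\mathcal{B}_t$.

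The only point needing care is the choice of witnesses. The value $S(B',\mathbf{x})$ depends only on the set $B'$, not on which assignment produced it, so any witness $\mathbf{u}'$ for $B'$ will do. This is exactly why Lemma~\ref{lem:cwkit:transition-correct} is stated for an arbitrary $\mathbf{u}'$. The condition $t\geq 4$ is required so that the transition lemma applies, i.e., so that every pair $(a,b)$ with $b\le t-1$ has a third index available in $[t-1]$.
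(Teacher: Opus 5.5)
Your proof is correct and takes the same route as the paper: both inclusions come from truncating an assignment at position $t-1$, or extending it by $\mathbf{x}$, and then invoking Lemma~\ref{lem:cwkit:transition-correct}. You are slightly more explicit than the paper on two points: you note that $B(\mathbf{u}')\in\mathcal{B}_{t-1}$, and you note that $S(B',\mathbf{x})$ does not depend on which witness $\mathbf{u}'$ produces $B'$.
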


\begin{proof}
    Let $\mathcal{D} = \bigcup_{\mathbf{x}\in J_t,\; B'\in \mathcal{B}_{t-1}} S(B', \mathbf{x})$.
    Let $B\in \mathcal{B}_t$.
    By the definition of $\mathcal{B}_t$, there exists
    $\mathbf{u} = (\mathbf{u}_1, \mathbf{u}_2, \ldots, \mathbf{u}_t)$
    such that $B = B(\mathbf{u})$.

    Let $\mathbf{u}' = (\mathbf{u}_1, \mathbf{u}_2,\ldots, \mathbf{u}_{t-1})$
    and let $B' = B(\mathbf{u}')$.
    By Lemma~\ref{lem:cwkit:transition-correct},
    $B = S(B', \mathbf{x})$.
    Therefore, $B\in \mathcal{D}$.

    For the reverse direction, let $B\in \mathcal{D}$.
    Then $B = S(B', \mathbf{x})$ for some
    $B'\in \mathcal{B}_{t-1}$ and $\mathbf{x}\in J_t$.
    By the definition of $\mathcal{B}_{t-1}$, there exists
    $\mathbf{u}' = (\mathbf{u}_1, \mathbf{u}_2, \ldots, \mathbf{u}_{t-1})$
    such that $\mathbf{u}_i\in J_i$ for all $i\in [t-1]$.
    Let
    \[
    \mathbf{u} = (\mathbf{u}_1, \mathbf{u}_2, \ldots, \mathbf{u}_{t-1}, \mathbf{u}_t = \mathbf{x}).
    \]
    By Lemma~\ref{lem:cwkit:transition-correct},
    $B(\mathbf{u}) = S(B', \mathbf{x}) = B$.
    Therefore, $B\in \mathcal{B}_t$.
\end{proof}

\begin{lemma}
    \label{lem:cwkit-yes-instance}
    Let $I = (T_0, k, \mathcal{A}, \mathcal{X})$ be an instance of \textsc{CWKIT}.
    Assume that $T_0$ has at least $3$ vertices.
    Then $I$ is a yes-instance if and only if there exists a set
    $B\in \mathcal{B}_n$ such that $B$ contains no bad triples.
\end{lemma}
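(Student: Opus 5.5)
The plan is to prove both directions by translating between decycling families $\mathcal{Y}$ of $T_0$ satisfying the weight constraints and assignments $\mathbf{u}=(\mathbf{u}_1,\ldots,\mathbf{u}_n)$ with $\mathbf{u}_i\in J_i$. Lemma~\ref{lem:bad-triples-transitivity}, applied to the transitive tournament $T$ with topological order $u_1,\ldots,u_n$, does the main work. One remark first: $\mathcal{B}_n$ is defined directly as $\{B(\mathbf{u})\}$ over all such assignments. Lemma~\ref{lem:cwkit:bt} only shows that the recursive computation produces this set, so the statement can be argued purely from the definition of $\mathcal{B}_n$. The hypothesis $n\ge 3$ is only needed so that $\mathcal{B}_n$ is defined (as $\mathcal{B}_t$ is defined for $t\ge 3$).

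\emph{Forward direction.} Suppose $\mathcal{Y}=(Y_1,\ldots,Y_k)$ is a decycling family of $T_0$ with $\mathrm{wt}_{\mathcal{Y}}(\mathbf{u}_i)\in A_i$ for all $i$. Form $\mathcal{W}=(X_1,\ldots,X_s,Y_1,\ldots,Y_k)$ and let $\mathbf{u}_i$ be the characteristic vector of $u_i$ with respect to $\mathcal{W}$.
\begin{enumerate}
\item The first $s$ coordinates of $\mathbf{u}_i$ respect $\mathcal{X}$.
\item The last $k$ coordinates form the characteristic vector of $u_i$ with respect to $\mathcal{Y}$, so their weight lies in $A_i$. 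Together with step 1, this gives $\mathbf{u}_i\in J_i$.
\item Hence $B(\mathbf{u})\in\mathcal{B}_n$.
\item As noted before Lemma~\ref{lem:bad-triples-transitivity}, $T\oplus\mathcal{W}=T_0\oplus\mathcal{Y}$, which is acyclic. An acyclic tournament is transitive.
\item By Lemma~\ref{lem:bad-triples-transitivity}, $B(\mathbf{u})$ therefore contains no bad triple.
\end{enumerate}

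\emph{Reverse direction.} Suppose $B\in\mathcal{B}_n$ has no bad triple. By definition, $B=B(\mathbf{u})$ for some $\mathbf{u}$ with $\mathbf{u}_i\in J_i$.
\begin{enumerate}
\item For $j\in[k]$, define $Y_j=\{u_i : \text{coordinate } s+j \text{ of } \mathbf{u}_i \text{ is } 1\}$, and set $\mathcal{Y}=(Y_1,\ldots,Y_k)$.
\item Since each $\mathbf{u}_i$ respects $\mathcal{X}$, $\mathbf{u}_i$ is exactly the characteristic vector of $u_i$ with respect to $\mathcal{W}=(X_1,\ldots,X_s,Y_1,\ldots,Y_k)$.
\item By Lemma~\ref{lem:bad-triples-transitivity}, $T\oplus\mathcal{W}$ is transitive, hence acyclic.
\item Since $T\oplus\mathcal{W}=T_0\oplus\mathcal{Y}$, the tuple $\mathcal{Y}$ is a decycling family of $T_0$ of size $k$.
\item The weight of $u_i$ with respect to $\mathcal{Y}$ is $\mathrm{wt}(\mathbf{u}_i^{\mathrm{last}(k)})\in A_i$, since $\mathbf{u}_i\in J_i$. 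So $I$ is a yes-instance.
\end{enumerate}

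The main point needing care is step 2 of the reverse direction: the prefix coordinates of the chosen vectors must coincide with membership in the fixed sets $X_j$. Otherwise the vectors would not be realized by a family of the form $(\mathcal{X},\mathcal{Y})$, and the identity $T\oplus\mathcal{W}=T_0\oplus\mathcal{Y}$ would not apply. This is exactly what the requirement in $J_i$ that vectors respect $\mathcal{X}$ guarantees, so the step should go through directly. The remaining steps are routine bookkeeping.
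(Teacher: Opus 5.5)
Your proof is correct. The forward direction matches the paper's. Your converse, however, takes a more direct route.

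\emph{Your route.} You take $\mathbf{u}$ with $B=B(\mathbf{u})$ straight from the non-recursive definition of $\mathcal{B}_n$. You define $\mathcal{Y}$ from the last $k$ coordinates and apply Lemma~\ref{lem:bad-triples-transitivity} once.

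\emph{The paper's route.} The paper proves the converse by induction on $n$, with your argument serving only as the base case $n=3$. In the inductive step it writes $B=S(B',\mathbf{x})$ with $B'\in\mathcal{B}_{n-1}$ and $\mathbf{x}\in J_n$, and notes that $B'\subseteq B$ has no bad triple. It then takes a family for $T-u_n$ from the induction hypothesis and extends it to $u_n$ via $\mathbf{x}$. Finally it invokes Lemma~\ref{lem:cwkit:transition-correct} to conclude $B(\mathbf{u})=B$.

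\emph{What your version buys.} It is shorter. It also does not depend on Lemmas~\ref{lem:cwkit:transition-correct} and~\ref{lem:cwkit:bt}; those are then needed only to show that Algorithm~\ref{alg:cwkit} really computes $\mathcal{B}_n$. Moreover, it sidesteps a point the paper's induction glosses over. For $S(B(\mathbf{u}'),\mathbf{x})=B$ to follow, the family supplied by the induction hypothesis must have vector $\mathbf{u}'$ with $B(\mathbf{u}')=B'$ exactly. That stronger fact holds directly by the definition of $\mathcal{B}_{n-1}$, and once you use it, the induction becomes unnecessary.

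\emph{What the paper's version buys.} Its inductive framing mirrors how the algorithm builds $\mathcal{B}_n$ one position at a time. Given Lemma~\ref{lem:cwkit:bt}, though, it gains nothing logically over your argument.
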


\begin{proof}
    Assume that $I$ is a yes-instance.
    Then there exists a decycling family
    $\mathcal{Y} = (Y_1, Y_2, \ldots, Y_k)$
    satisfying the weight constraints.
    Let
    $\mathbf{u} = (\mathbf{u}_1, \mathbf{u}_2, \ldots, \mathbf{u}_n)$
    be the vector of characteristic vectors of the vertices of $T_0$
    corresponding to
    $\mathcal{W} = (X_1, X_2, \ldots, X_s, Y_1, Y_2, \ldots, Y_k)$.
    By definition, $B(\mathbf{u})\in \mathcal{B}_n$ and by Lemma~\ref{lem:bad-triples-transitivity},
    $B(\mathbf{u})$ contains no bad triples.

    For the converse, assume that there exists a set
    $B\in \mathcal{B}_n$ such that $B$ contains no bad triples.
    We prove that $I$ is a yes-instance by induction on $n$.

    The base case is $n = 3$.
    Recall that
    \[
    \mathcal{B}_3 =
    \Big\{\, \{(\mathbf{a},\mathbf{b},\mathbf{c})\}
    \;\Big|\;
    \mathbf{a}\in J_1,\ \mathbf{b}\in J_2,\ \mathbf{c}\in J_3
    \,\Big\}.
    \]
    Let $B = \{(\mathbf{a}, \mathbf{b}, \mathbf{c})\}$.
    Set $\mathbf{u}_1 = \mathbf{a}$, $\mathbf{u}_2 = \mathbf{b}$,
    $\mathbf{u}_3 = \mathbf{c}$, and
    $\mathbf{u} = (\mathbf{u}_1, \mathbf{u}_2, \mathbf{u}_3)$.
    We construct a corresponding decycling family
    $\mathcal{W} = (X_1, X_2, \ldots, X_s, Y_1, Y_2, \ldots, Y_k)$
    as follows.
    For $1\leq j\leq k$ and $1\leq i\leq 3$,
    the set $Y_j$ contains $u_i$ if and only if
    the $(s+j)$\textsuperscript{th} entry of $\mathbf{u}_i$ is $1$.
    By Lemma~\ref{lem:bad-triples-transitivity},
    $\mathcal{W}$ is a decycling family of $T$.
    Therefore, $\mathcal{Y}$ is a decycling family of $T_0$
    satisfying the weight constraints.

    Assume that the statement holds for all $n'$ with
    $3\leq n'\leq n-1$.
    Since $B\in \mathcal{B}_n$, there exists a set
    $B'\in \mathcal{B}_{n-1}$ and a vector $\mathbf{x}\in J_n$
    such that $B = S(B', \mathbf{x})$.
    As $B'\subseteq B$ and $B$ contains no bad triples,
    $B'$ also contains no bad triples.
    By the induction hypothesis, there exists a decycling family
    $\mathcal{W}' = (X_1, X_2, \ldots, X_s, Y_1', Y_2', \ldots, Y_k')$
    of $T - u_n$ respecting the weight constraints
    $(A_1, A_2, \ldots, A_{n-1})$.
    Let
    $\mathbf{u}' = (\mathbf{u}_1, \mathbf{u}_2, \ldots, \mathbf{u}_{n-1})$
    be the corresponding vector.
    Set
    $\mathbf{u} = (\mathbf{u}_1, \mathbf{u}_2, \ldots, \mathbf{u}_{n-1}, \mathbf{u}_n = \mathbf{x})$.
    By Lemma~\ref{lem:cwkit:transition-correct},
    $B = B(\mathbf{u})$.

    We construct a decycling family $\mathcal{W}$ of $T$ as follows.
    Let $\mathcal{W} = (X_1, X_2, \ldots, X_s, Y_1, Y_2, \ldots, Y_k)$.
    For each $1\leq j\leq k$, set
    $Y_j = Y_j' \cup \{u_n\}$ if the $(s+j)$\textsuperscript{th} entry of $\mathbf{u}_n$ is $1$,
    and set $Y_j = Y_j'$ otherwise.
    Clearly, $\mathbf{u}$ corresponds to $\mathcal{W}$, and
    $\mathcal{W}$ is a decycling family of $T$.
    Hence, $\mathcal{Y}$ is a decycling family of $T_0$
    respecting the weight constraints.
\end{proof}

Algorithm~\ref{alg:cwkit} is essentially the one that we described so far.

\begin{algorithm}[H]
\caption{An FPT algorithm for \textsc{CWKIT}}
\label{alg:cwkit}
\SetNlSty{textbf}{(}{)}
\DontPrintSemicolon
\KwIn{A tournament $T_0$ on vertices $u_1,\ldots,u_n$, an integer $k$, allowed-weight sets $\mathcal{A}=(A_1,\ldots,A_n)$, where $A_i$ is a nonempty
subset of $\{0,1,\ldots,k\}$, for $i\in [n]$, and a decycling family $\mathcal{X}$ for $T_0$}
\KwOut{\texttt{True} if $(T_0,k,\mathcal{A},\mathcal{X})$ is a yes-instance, \texttt{False} otherwise}

$J \gets \{0,1\}^{|\mathcal{X}|+k}$\;

\For{$i \gets 1$ \KwTo $n$}{
    $J_i \gets \{\, \mathbf{x}\in J \mid \mathrm{wt}(\mathbf{x}^{\mathrm{last}(k)})\in A_i \, \text{and $\mathbf{x}$ respects $\mathcal{X}$}\}$\;
}

\If{$n=1$ or $n=2$}{
    \Return{\texttt{True}}\;
}

$\mathcal{B}_3 \gets \Big\{\, \{(\mathbf{a},\mathbf{b},\mathbf{c})\} \;\Big|\; \mathbf{a}\in J_1,\ \mathbf{b}\in J_2,\ \mathbf{c}\in J_3 \,\Big\}$\;

\For{$t \gets 4$ \KwTo $n$}{
    $\mathcal{B}_t \gets \{\, S(B',\mathbf{x}) \mid B'\in \mathcal{B}_{t-1},\ \mathbf{x}\in J_t \,\}$\;
}

\ForEach{$B \in \mathcal{B}_n$}{
    \If{$B$ contains no bad triple}{
        \Return{\texttt{True}}\;
    }
}
\Return{\texttt{False}}\;
\end{algorithm}

\begin{lemma}
    \label{lem:cwkit-alg-correctness}
    Let $I = (T_0, k, \mathcal{A}, \mathcal{X})$ be an instance of \textsc{CWKIT}.
    Then Algorithm~\ref{alg:cwkit} returns \texttt{True} if $I$ is a yes-instance and returns \texttt{False} otherwise.
\end{lemma}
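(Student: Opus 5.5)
The proof is a case analysis on $n$, reducing the main case to Lemma~\ref{lem:cwkit-yes-instance} and Lemma~\ref{lem:cwkit:bt}. Throughout, as in the algorithm's input specification, we assume every $A_i$ is nonempty, so every $J_i$ is nonempty.

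First, the small cases $n\in\{1,2\}$, where the algorithm returns \texttt{True}. We must show that every such instance is a yes-instance. For each $i$, pick some $\mathbf{x}_i\in J_i$ and read off $Y_1,\dots,Y_k$ from the last $k$ coordinates of the $\mathbf{x}_i$. This respects the weight constraints by the definition of $J_i$. The resulting graph has at most two vertices and is oriented, since inversion keeps it oriented, so it has no cycle. Hence the family is decycling.

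For $n\ge 3$, the algorithm computes $\mathcal{B}_3$ exactly as defined, since the set of triples for $t=3$ is the single triple $(\mathbf{u}_1,\mathbf{u}_2,\mathbf{u}_3)$. It then iterates the recurrence for $t=4,\dots,n$. Induction on $t$ using Lemma~\ref{lem:cwkit:bt} shows that the set the algorithm calls $\mathcal{B}_t$ coincides with the set $\mathcal{B}_t$ of the definition. The final loop returns \texttt{True} exactly when some $B\in\mathcal{B}_n$ has no bad triple. By Lemma~\ref{lem:cwkit-yes-instance}, this happens if and only if $I$ is a yes-instance.

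The only subtle point is the base case of this induction: the paper defines $\mathcal{B}_t$ only for $t\ge 3$, and Lemma~\ref{lem:cwkit:bt} needs $t\ge4$. Both conditions are met here. Everything else is a direct chaining of the earlier lemmas.
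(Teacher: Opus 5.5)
Your proposal is correct and follows the paper's proof essentially step for step: you handle the case $n\in\{1,2\}$ by choosing arbitrary vectors from the nonempty sets $J_i$, and for $n\ge 3$ you combine Lemma~\ref{lem:cwkit:bt} (the algorithm computes $\mathcal{B}_n$ as defined) with Lemma~\ref{lem:cwkit-yes-instance}. The only difference is cosmetic: you spell out the induction on $t$ from $\mathcal{B}_3$ and argue the equivalence in one pass, whereas the paper treats the \texttt{True} and \texttt{False} return cases separately.
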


\begin{proof}
Assume that the algorithm returns \texttt{True}. Then the \Return statement in the test $n=1$ or $n=2$, or in the loop over $\mathcal{B}_n$, is executed.

If $n=1$ or $n=2$, then $T_0$ is a transitive tournament. Since each $A_i$ is assumed nonempty, we can choose weights in $A_i$ and define
$\mathcal{Y}=(Y_1,\ldots,Y_k)$ accordingly; thus $I$ is a yes-instance.

Otherwise, $n\geq 3$ and there exists a set $B\in \mathcal{B}_n$ such that $B$ contains no bad triples.
By Lemma~\ref{lem:cwkit:bt}, $\mathcal{B}_n$ is computed correctly by the update rule in the algorithm.
Then, by Lemma~\ref{lem:cwkit-yes-instance}, $I$ is a yes-instance.

For the other direction, assume that the algorithm returns \texttt{False}. This implies that $n\geq 3$ and there is no set $B\in \mathcal{B}_n$
such that $B$ contains no bad triples. By Lemma~\ref{lem:cwkit-yes-instance}, $I$ is a no-instance.
\end{proof}

Now Theorem~\ref{thm:wrkit} follows from Lemmas~\ref{lem:iterative-compression} and~\ref{lem:cwkit-alg-correctness}.
Note that the algorithm for \textsc{WKIT} involves one call to the algorithm for \textsc{KIT} and a call to the \textsc{CWKIT} algorithm.
The running time of both these algorithms is dominated by the size of $\mathcal{B}_n$, which is
$2^{2^{3(s+k)}}$, and in our application, $s=k$, hence $2^{2^{6k}}$.

\section{Block graphs}\label{sec:block}

In this section, we consider oriented input graphs for \textsc{$k$-Inversion} such that the underlying undirected graph is a block graph. 
We obtain an FPT algorithm using the algorithm we developed for WKIT and by performing dynamic programming over the block tree decomposition of the underlying graph.

Recall that a graph is a \textit{block graph} if each of its 2-connected components is a clique. Each such clique is known as a \textit{block}.

Let $D$ be an oriented graph and let $G$ be its underlying undirected graph. 
If $G$ is disconnected, then we can apply our algorithm separately on each connected component and combine the results straightforwardly. 
Therefore, we assume that $G$ is connected. 

Let $B_1, B_2, \ldots, B_p$ denote the blocks of $G$, and let $c_1, c_2, \ldots, c_q$ denote the cut vertices of $G$.
The block tree of $G$, denoted by $BT(G)$, is defined as follows. 
We introduce a set of $p$ vertices $B = \{b_1, b_2, \ldots, b_p\}$ corresponding to the blocks $B_1, B_2, \ldots, B_p$,
and a set of $q$ vertices $C = \{c_1, c_2, \ldots, c_q\}$ corresponding to the cut vertices of $G$ with the same labels.
The block tree $BT(G)$ has vertex set $U = B \cup C$. There is an edge $c_i b_j$ in $BT(G)$ if and only if $c_i \in B_j$.
Note that $BT(G)$ is a tree and that $B$ and $C$ are independent sets.

We root the tree $BT(G)$ at $b_p$.
The \textit{parent} and \textit{children} of a vertex in $BT(G)$ are defined
in the usual way. Let the vertex set $U$ of $BT(G)$ be $\{u_1, u_2, \ldots, u_{p+q}\}$ such that the parent of a vertex is numbered higher than the vertex.
We assume such an ordering for $B$ and $C$ as well. 
That is, a vertex in $B$ comes before its ancestors in $B$. Similarly for $C$.
Note that the choice of the root respects the above ordering. 

By $G_{u_i}$, we denote the subgraph of $G$ induced by all vertices in the blocks corresponding to the block vertices 
in the subtree rooted at $u_i$. By $D_{u_i}$, we denote the graph $G_{u_i}$ with edge orientations inherited from $D$.

For a cut vertex $c_i \in C$, we let $W(c_i)$ denote the set of all integers $w \in \{0,1,\ldots,k\}$ such that there exists a decycling family
$\mathcal{Y}$ of size $k$ of $D_{c_i}$ with $\mathrm{wt}(\mathbf{c}_i) = w$. 
Slightly differently, for a block vertex $b_i \in B \setminus \{b_p\}$ with $\mathrm{parent}(b_i) = c_j \in C$, 
we let $W(b_i)$ denote the set of all integers $w \in \{0,1,\ldots,k\}$ such that there exists a decycling family
$\mathcal{Y}$ of size $k$ of $D_{b_i}$ with $\mathrm{wt}(\mathbf{c}_j) = w$.
As a boundary case, for the root block vertex $b_p$, we let $W(b_p) = \{0,1,\ldots,k\}$ if $(D,k)$ is a yes-instance, and
$W(b_p) = \emptyset$ otherwise. 

We now show how to compute $W(u_i)$ in a bottom-up manner. 

\begin{lemma}
    \label{lem:block:wc}
    For a cut vertex $c_i \in C$, we have
    \[
        W(c_i) = \bigcap_{b_j \in \mathrm{children}(c_i)} W(b_j).
    \]
\end{lemma}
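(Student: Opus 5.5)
We prove the two inclusions separately. The structural fact used throughout is that $D_{c_i}$ is the union of the subgraphs $D_{b_j}$ for $b_j \in \mathrm{children}(c_i)$. Any two of these subgraphs share exactly the vertex $c_i$, and no arcs of $D$ join vertices in different children. This holds because $c_i$ is a cut vertex: blocks in different child subtrees of $c_i$ in $BT(G)$ meet only at $c_i$, and every edge of $G$ lies inside a single block.

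\emph{Inclusion $\subseteq$.} Let $w \in W(c_i)$, witnessed by a decycling family $\mathcal{Y}=(Y_1,\ldots,Y_k)$ of $D_{c_i}$ with $\mathrm{wt}(\mathbf{c}_i)=w$. For each child $b_j$, the restriction of $D_{c_i}\oplus\mathcal{Y}$ to $V(D_{b_j})$ is exactly $D_{b_j}\oplus\mathcal{Y}$, since inversion acts arc by arc. This restriction is an induced subgraph of an acyclic graph, hence acyclic. By the paper's convention, $\mathcal{Y}$ may be viewed as a decycling family of $D_{b_j}$ (vertices outside the subgraph are harmless). The parent of $b_j$ is $c_i$, so the weight recorded in $W(b_j)$ is that of $\mathbf{c}_i$, which is $w$. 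Hence $w\in W(b_j)$ for every child $b_j$.

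\emph{Inclusion $\supseteq$.} Let $w$ lie in every $W(b_j)$. For each child $b_j$, choose a decycling family $\mathcal{Y}^{(j)}$ of $D_{b_j}$ with $\mathrm{wt}(\mathbf{c}_i)=w$. The key step is to align these families at $c_i$. Since all the characteristic vectors of $c_i$ have the same weight $w$, for each $j$ we can permute the coordinates of $\mathcal{Y}^{(j)}$ so that $c_i$ lies in exactly $Y^{(j)}_1,\ldots,Y^{(j)}_w$. Reordering the sets of a family does not change the resulting graph, because the order of inversions is irrelevant. We may also assume each $\mathcal{Y}^{(j)}$ contains only vertices of $D_{b_j}$, by restricting each set.

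Now define $Y_\ell=\bigcup_j Y^{(j)}_\ell$ for $\ell\in[k]$. These unions are consistent at $c_i$: in every family, $c_i$ belongs to coordinates $1,\ldots,w$ and no others, so $\mathrm{wt}(\mathbf{c}_i)=w$ in $\mathcal{Y}$. Every arc of $D_{c_i}$ lies inside some $D_{b_j}$, and both of its endpoints are in $Y_\ell$ exactly when they are in $Y^{(j)}_\ell$, because the vertex sets of the children intersect only in $c_i$. Therefore $(D_{c_i}\oplus\mathcal{Y})[V(D_{b_j})]=D_{b_j}\oplus\mathcal{Y}^{(j)}$, which is acyclic.

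It remains to show that $D_{c_i}\oplus\mathcal{Y}$ is acyclic. Suppose it contains a directed cycle. If the cycle uses vertices from two different children, then, since the only vertex shared between children is $c_i$ and there are no arcs between different children, the cycle would have to pass through $c_i$ twice. That is impossible for a cycle. So the cycle lies within a single $D_{b_j}$, contradicting the acyclicity established above. Hence $\mathcal{Y}$ is a decycling family of $D_{c_i}$ with $\mathrm{wt}(\mathbf{c}_i)=w$, so $w\in W(c_i)$.

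The main obstacle is this gluing step: the alignment argument (permuting coordinates so the $c_i$-vectors coincide) together with the observation that cycles cannot cross a cut vertex.
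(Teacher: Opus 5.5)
Your proof is correct and follows the same route as the paper. For $\subseteq$ you restrict to the induced subgraphs $D_{b_j}$, and for $\supseteq$ you permute each child's family so that $c_i$ lies in exactly the first $w$ sets and then take coordinatewise unions. You are in fact slightly more careful than the paper in two places: you explicitly restrict each $\mathcal{Y}^{(j)}$ to $V(D_{b_j})$, which the paper's disjointness claim tacitly needs given its convention allowing outside vertices, and you spell out why no directed cycle can cross the cut vertex $c_i$, where the paper just writes ``it follows''.
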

\begin{proof}
    Let $W' = \bigcap_{b_j \in \mathrm{children}(c_i)} W(b_j)$.
    Let $w \in W(c_i)$. Then there exists a decycling family $\mathcal{Y}$ of size $k$ of $D_{c_i}$ 
    such that $\mathrm{wt}(\mathbf{c}_i) = w$. 
    Recall that, for each $b_j \in \mathrm{children}(c_i)$, the graph $D_{b_j}$ is an induced subgraph of $D_{c_i}$ and contains $c_i$.
    Therefore, $\mathcal{Y}$ induces a vector of weight $w$ on $\mathrm{parent}(b_j) = c_i$, implying that $w \in W'$.

    For the converse direction, let $w \in W'$. Then, for every $b_j \in \mathrm{children}(c_i)$, there exists a decycling family
    $\mathcal{Y}_j = (Y_{j,1}, Y_{j,2}, \ldots, Y_{j,k})$ of $D_{b_j}$ that induces a vector of weight $w$ on
    $\mathrm{parent}(b_j) = c_i$.
    Without loss of generality, assume that $c_i \in Y_{j,\ell}$ for all $\ell \in \{1,2,\ldots,w\}$ and that
    $c_i \notin Y_{j,\ell}$ for all $\ell \in \{w+1,w+2,\ldots,k\}$.
    We construct a decycling family $\mathcal{Y} = (Y_1, Y_2, \ldots, Y_k)$ for $D_{c_i}$ by defining
    $Y_\ell = \bigcup_{b_j \in \mathrm{children}(c_i)} Y_{j,\ell}$ for each $\ell \in [k]$.
    Note that for each $\ell\in \{1,\ldots,w\}$, the sets $\{Y_{j,\ell}\}_{b_j \in \mathrm{children}(c_i)}$ pairwise intersect exactly at $c_i$,
    and for each $\ell\in \{w+1,w+2,\ldots,k\}$, the sets in $\{Y_{j,\ell}\}_{b_j \in \mathrm{children}(c_i)}$ are pairwise disjoint.
    Hence, the effect of applying $\mathcal{Y}$ on $D_{b_j}$ is identical to that of applying $\mathcal{Y}_j$.
    It follows that $\mathcal{Y}$ is a decycling family of $D_{c_i}$ inducing weight $w$ on $c_i$.
\end{proof}

\begin{lemma}
    \label{lem:block:wb}
    Let $b_i \in B \setminus \{b_p\}$. Then $W(b_i)$ is the set $W'$ of all $w \in \{0,1,\ldots,k\}$
    such that \textsc{WKIT}$(B_i, k, \mathcal{A})$ is a yes-instance, where $\mathcal{A}$
    consists of the sets $A(v)$ for $v \in B_i$, defined as follows:
    $A(v) = \{w\}$ if $v = \mathrm{parent}(b_i)$,
    $A(v) = W(c_j)$ if $v = c_j \in \mathrm{children}(b_i)$,
    and $A(v) = \{0,1,\ldots,k\}$ otherwise.
\end{lemma}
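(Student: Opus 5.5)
We prove the two inclusions $W(b_i)\subseteq W'$ and $W'\subseteq W(b_i)$ separately, following the same pattern as Lemma~\ref{lem:block:wc}. Let $c = \mathrm{parent}(b_i)$. Note that $D_{b_i}$ consists of the block $B_i$ together with $D_{c_j}$ for each child $c_j\in\mathrm{children}(b_i)$, where $D_{c_j}$ meets $B_i$ exactly in $c_j$. Moreover, for distinct children $c_j,c_{j'}$ the graphs $D_{c_j}$ and $D_{c_{j'}}$ are vertex-disjoint. Since $B_i$ is a clique, $D[B_i]$ is a tournament. Also, every cycle of $D_{b_i}$ lies entirely within one block, because a cycle is 2-connected. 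Hence a family is decycling for $D_{b_i}$ if and only if it is decycling for $D[B_i]$ and for each $D_{c_j}$.

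\emph{($\subseteq$)} Let $w\in W(b_i)$, witnessed by a decycling family $\mathcal{Y}$ of $D_{b_i}$ of size $k$ with $\mathrm{wt}(\mathbf{c})=w$. Restrict $\mathcal{Y}$ to $B_i$. It decycles the tournament $D[B_i]$ and gives $c$ weight $w\in A(c)$. For each child $c_j$, the restriction of $\mathcal{Y}$ decycles the subgraph $D_{c_j}$, so $\mathrm{wt}(\mathbf{c}_j)\in W(c_j)=A(c_j)$. The remaining vertices are unconstrained, so $\textsc{WKIT}(B_i,k,\mathcal{A})$ is a yes-instance and $w\in W'$.

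\emph{($\supseteq$)} Let $w\in W'$, and let $\mathcal{Z}=(Z_1,\dots,Z_k)$ be a weight-respecting decycling family of $D[B_i]$, with all $Z_\ell\subseteq B_i$. For each child $c_j$ we have $\mathrm{wt}_{\mathcal{Z}}(\mathbf{c}_j)=w_j\in W(c_j)$. So there is a decycling family $\mathcal{Y}_j$ of $D_{c_j}$ of size $k$ giving $c_j$ weight $w_j$. Its restriction to $V(D_{c_j})$ is still decycling, so we may assume it lives there.

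The key step is alignment. Two characteristic vectors of equal weight differ by a permutation of coordinates, and permuting the sets of a family preserves the decycling property. So we permute the coordinates of $\mathcal{Y}_j$ so that $c_j\in Y_{j,\ell}$ exactly when $c_j\in Z_\ell$. Then we set $Y_\ell = Z_\ell\cup\bigcup_j Y_{j,\ell}$.

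It remains to check that $\mathcal{Y}=(Y_1,\dots,Y_k)$ acts correctly on each arc. An arc inside $D_{c_j}$ is flipped only by the sets containing both of its endpoints. Within $V(D_{c_j})$, each $Y_\ell$ equals $Y_{j,\ell}$: the only overlap with $Z_\ell$ or with other $Y_{j',\ell}$ is at $c_j$, and membership of $c_j$ agrees by the alignment. Symmetrically, within $B_i$ each $Y_\ell$ equals $Z_\ell$. So $\mathcal{Y}$ acts on each piece exactly as its component family does. By the cycle-locality observation, $\mathcal{Y}$ is therefore decycling for $D_{b_i}$, and $\mathrm{wt}(\mathbf{c})=w$ because $c$ is constrained to $A(c)=\{w\}$. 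Hence $w\in W(b_i)$.

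The main obstacle is this gluing/alignment step, together with justifying that every cycle stays within a single block. The latter follows since any cycle of $G$ lies inside one 2-connected component, which is a block.
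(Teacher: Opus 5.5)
Your proof is correct and follows the same route as the paper. For $W(b_i)\subseteq W'$ you restrict a witness family, and for the converse you permute each child family $\mathcal{Y}_j$ so that the membership of $c_j$ matches the WKIT solution, then take componentwise unions. Two of your steps are explicit where the paper only uses them implicitly: restricting each $\mathcal{Y}_j$ to $V(D_{c_j})$, and noting that every cycle lies inside a single block. These justify the paper's claims that no set of $\mathcal{Y}_j$ contains $\mathrm{parent}(b_i)$ or two vertices of $B_i$.
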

\begin{proof}
    Let $c_\ell = \mathrm{parent}(b_i)$.
    Let $w \in W(b_i)$. This implies that there exists a decycling family
    $\mathcal{Y} = (Y_1, Y_2, \ldots, Y_k)$ of $D_{b_i}$ such that it induces weight $w$ on $c_\ell$.
    Assume that it induces weight $w_j$ on $c_j \in \mathrm{children}(b_i)$.
    Clearly, $w_j \in W(c_j)$. Then \textsc{WKIT}$(B_i, k, \mathcal{A})$ is a yes-instance, where
    $\mathcal{A}$ is defined as in the lemma statement. Indeed, $\mathcal{Y}$ is a decycling family of $B_i$
    satisfying the weight constraints. 
    Therefore, $w \in W'$.

    For the other direction, assume that $w \in W'$. This implies that
    \textsc{WKIT}$(B_i, k, \mathcal{A})$ is a yes-instance, where $\mathcal{A}$ is as defined in the lemma statement. 
    Then there exists a decycling family $\mathcal{X} = (X_1, X_2, \ldots, X_k)$ of $B_i$
    satisfying the weight constraints, that is, $\mathcal{X}$ induces a weight in $A(v)$ for each vertex $v \in B_i$.
    We obtain a decycling family $\mathcal{Y} = (Y_1, Y_2, \ldots, Y_k)$ of 
    $D_{b_i}$ which induces weight $w$ on $c_\ell$ as follows.

    Let $w_j$ be the weight induced by $\mathcal{X}$ on $c_j \in \mathrm{children}(b_i)$.
    Since $A(c_j) = W(c_j)$, it follows that $w_j \in W(c_j)$. 
    Let $\mathcal{Y}_j = (Y_{j,1}, Y_{j,2}, \ldots, Y_{j,k})$ be a decycling family of $D_{c_j}$
    inducing weight $w_j$ on $c_j$, for each $c_j \in \mathrm{children}(b_i)$.
    Assume without loss of generality that $c_j \in Y_{j,a}$ if and only if $c_j \in X_a$
    (renumber the sets in $\mathcal{Y}_j$ accordingly).
    Now define
    \[
        Y_a = X_a \cup \bigcup_{c_j \in \mathrm{children}(b_i)} Y_{j,a}.
    \]
    Clearly, $\mathcal{Y}$ induces weight $w$ on $c_\ell$, since none of the sets in $\mathcal{Y}_j$
    (for $c_j \in \mathrm{children}(b_i)$) contains $c_\ell$.
    It remains to prove that $\mathcal{Y}$ is a decycling family of $D_{b_i}$.
    Note that no two vertices in $B_i$ come together in a set in $\mathcal{Y}_j$.
    Therefore, the sets of $\mathcal{X}$ added to the sets in $\mathcal{Y}$ are solely responsible for changing the 
    adjacency of edges inside $B_i$. Then the sets of $\mathcal{X}$ added to the sets in $\mathcal{Y}$ make sure that no cycle remains in $D_{b_i}\oplus \mathcal{Y}$.
    Note also that, for $c_j, c_{j'} \in \mathrm{children}(b_i)$ with $j \neq j'$,
    the graphs $D_{c_j}$ and $D_{c_{j'}}$ do not share any common vertex.
    Therefore, any set in $\mathcal{Y}_j$ is pairwise disjoint with any set in $\mathcal{Y}_{j'}$. 
    Moreover, the sets in $\mathcal{Y}_j$ added to the sets in $\mathcal{Y}$ kill the cycles in $G_{c_j}$.
    Therefore, $\mathcal{Y}$ is a decycling family of $G_{b_i}$.
\end{proof}

Thus we reach Algorithm~\ref{alg:block}. 

\begin{algorithm}[H]
\caption{An FPT algorithm for \textsc{$k$-Inversion} for oriented graphs with underlying block graphs}
\label{alg:block}
\SetNlSty{textbf}{(}{)}
\DontPrintSemicolon
\KwIn{An oriented graph $D$ where the underlying undirected graph $G$ is a connected block graph, and an integer $k$}
\KwOut{\texttt{True} if $(D, k)$ is a yes-instance, \texttt{False} otherwise}

Compute $BT(G)$ with $U = V(BT(G)) = \{u_1, u_2, \ldots, u_{p+q}\}$\;
Let $B = \{b_1, b_2, \ldots, b_p\}$ be the block vertices in $U$ corresponding to the blocks $B_1, B_2, \ldots, B_p$\;
Let $C = \{c_1, c_2, \ldots, c_q\}$ be the vertices in $U$ corresponding to the cut vertices in $G$ with the same labels\;

\For{$i \gets 1$ \KwTo $p+q$}{
    \eIf{$u_i = c_j$ for some $j\in [q]$}{
        $W(c_j)\gets \displaystyle\bigcap_{b_\ell \in \mathrm{children}(c_j)} W(b_\ell)$\;
    }{
        Let $u_i = b_j$ for some $j\in [p]$\;
        $W(b_j) \gets \emptyset$\;
        \ForEach{$w\in \{0, 1, \ldots, k\}$}{
            \ForEach{$v\in B_j$}{
                \eIf{$v = \mathrm{parent}(b_j)$}{
                    $A(v) \gets \{w\}$\;
                }{
                    \eIf{$v \in \mathrm{children}(b_j)$}{
                        $A(v)\gets W(v)$\;
                    }{
                        $A(v)\gets \{0, 1, \ldots, k\}$\;
                    }
                }
            }
            Let $\mathcal{A}$ be the tuple $(A(v))_{v\in B_j}$\;
            \If{\textsc{WKIT}$(B_j, k, \mathcal{A})$}{
                $W(b_j) \gets W(b_j)\cup \{w\}$\;
            }
        }
        \If{$i = p+q$}{
            \If{$W(b_j)\neq \emptyset$}{
                \Return{\texttt{True}}\;
            }
            \Return{\texttt{False}}\;
        }
    }
}
\end{algorithm}

\begin{lemma}
    \label{lem:block:alg}
    Algorithm~\ref{alg:block} returns \texttt{True} if the instance $(D,k)$ is a yes-instance, and returns \texttt{False} otherwise.
\end{lemma}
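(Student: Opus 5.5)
The plan is to prove, by induction along the processing order $u_1,u_2,\ldots,u_{p+q}$, the invariant that after iteration $i$ the set the algorithm stores for $u_i$ equals $W(u_i)$ as defined in the paper. Correctness of the returned value then follows from the root. Because the numbering places every vertex of $BT(G)$ before its parent, all children of $u_i$ have already been processed when $u_i$ is reached, so the induction hypothesis applies to them.

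\emph{Cut-vertex step.} If $u_i = c_j$, the algorithm stores $\bigcap_{b_\ell\in\mathrm{children}(c_j)} W(b_\ell)$. By induction each $W(b_\ell)$ was computed correctly, so Lemma~\ref{lem:block:wc} gives that the stored set is $W(c_j)$. A cut vertex always has at least one child block, since it lies in at least two blocks and only one of them can be its parent. Hence the intersection is over a nonempty family.

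\emph{Non-root block step.} If $u_i = b_j \neq b_p$, then for every $w\in\{0,\ldots,k\}$ the algorithm builds exactly the tuple $\mathcal{A}$ of Lemma~\ref{lem:block:wb}. In that tuple $A(c_\ell)=W(c_\ell)$ for the children of $b_j$, and these sets are correct by induction. The algorithm adds $w$ to $W(b_j)$ iff \textsc{WKIT} accepts, and that decision is correct by Theorem~\ref{thm:wrkit}. Lemma~\ref{lem:block:wb} then yields the stored set $=W(b_j)$. One technicality: \textsc{WKIT} is posed for a tournament with weight sets, and an empty $A(v)=W(c_\ell)$ simply produces a no-instance. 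This is consistent, because an empty $W(c_\ell)$ means the subtree below is already infeasible.

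\emph{Root step, the main obstacle.} At $i=p+q$ the vertex is $b_p$, which has no parent. The algorithm still loops over $w$, and for $v\in B_p$ the parent branch never fires, so every $A(v)$ is $W(v)$ for children and $\{0,\ldots,k\}$ otherwise, independently of $w$. Thus $W(b_p)$ ends up either all of $\{0,\ldots,k\}$ or empty. It remains to show this \textsc{WKIT} instance is a yes-instance iff $(D,k)$ is. For this I would re-run the argument of Lemma~\ref{lem:block:wb} with the parent constraint deleted.

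For the forward direction, a decycling family of $D=D_{b_p}$ restricts to a decycling family of each $D_{c_\ell}$, so it witnesses weights in $W(c_\ell)$ and is a valid \textsc{WKIT} solution on $B_p$.

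For the converse, take a \textsc{WKIT} solution $\mathcal{X}$ on $B_p$. For each child cut vertex choose a family $\mathcal{Y}_\ell$ of $D_{c_\ell}$ realizing the weight that $\mathcal{X}$ induces on $c_\ell$, and permute its coordinates to align with $\mathcal{X}$ at $c_\ell$. Then take coordinatewise unions. The pieces share only the cut vertices, so no arc is flipped by two different pieces. Every cycle of $D$ lies within one block, since the underlying graph of a cycle is $2$-connected. So every cycle lies in $B_p$ or in some $D_{c_\ell}$, and the union is decycling.

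(If $G$ is a single block, there are no cut vertices and the root step is just Theorem~\ref{thm:wrkit} on $B_p$.) Hence the algorithm returns \texttt{True} exactly when $(D,k)$ is a yes-instance.
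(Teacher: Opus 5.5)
Your proposal is correct and follows the paper's proof. The stored sets are correct by Lemmas~\ref{lem:block:wc} and~\ref{lem:block:wb}, and at the root $b_p$ the value of $w$ plays no role; you then re-run the gluing argument of Lemma~\ref{lem:block:wb} without the parent constraint for one direction, and use restriction of a decycling family of $D$ to $B_p$ and to each $D_{c_\ell}$ for the other. Your explicit induction over the processing order and your remark that every directed cycle of $D\oplus\mathcal{Y}$ lies inside a single block make the argument somewhat more rigorous than the paper's version.
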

\begin{proof}
    Lemma~\ref{lem:block:wc} essentially states that the algorithm computes $W(c_i)$, for $c_i\in C$, correctly and Lemma~\ref{lem:block:wb} implies 
    that the algorithm computes $W(b_i)$, for $b_i\in B\setminus \{b_p\}$, correctly. 
    
    Assume that the algorithm returns \texttt{True}. This implies that the algorithm computes $W(b_p) \neq \emptyset$. Note that,
    in the algorithm, we compute $W(b_p)$ in exactly the same way as for the other block vertices. The only difference in execution
    arises from the fact that $b_p$ has no parent. Therefore, $A(v)$ is set to $\{0,1,\ldots,k\}$
    for every vertex other than the cut vertices in $B_p$. This implies that
    \textsc{WKIT}$(B_p, k, \mathcal{A})$ returns \texttt{True} for at least one choice of $w$, and hence for all choices of $w$.
    Indeed, the value of $w$ has no effect on the iteration for $b_p$.
    Let $\mathcal{X}$ be any decycling family for which \textsc{WKIT}$(B_p, k, \mathcal{A})$ returns \texttt{True}. By combining
    $\mathcal{X}$ with the decycling families of $D_{c_j}$ having matching weights on $c_j$ (as induced by $\mathcal{X}$),
    as done in the proof of Lemma~\ref{lem:block:wb}, we obtain a decycling family for $D = D_{b_p}$. Therefore, $(D,k)$
    is a yes-instance.

    Now assume that the algorithm returns \texttt{False}. This implies that, for no choice of $w$,
    \textsc{WKIT}$(B_p, k, \mathcal{A})$ returns \texttt{True}. Consequently, there is no decycling family for $B_p$
    such that the weight induced on each $c_j$ belongs to $W(c_j)$, for all $c_j \in \mathrm{children}(b_p)$.
    It follows that there is no decycling family for $D$.
\end{proof}

Now, Theorem~\ref{thm:block} follows from Lemma~\ref{lem:block:alg} and the fact that the running time is dominated by polynomially many executions of WKIT
algorithm.

\section{An FPT Algorithm for $k$-Inversion}
\label{sec:tw}

In this section, we obtain an FPT algorithm for
\textsc{$k$-Inversion} parameterized by $k+\mathrm{tw}$,
where $\mathrm{tw}$ denotes the treewidth of the underlying undirected graph of the input graph.
The algorithm is based on dynamic programming over a nice tree
decomposition. We remark that the problem can be expressed by an MSO$_2$ formula whose
length depends only on $k$. By Courcelle's
theorem~\cite{courcelle1990monadic}, this implies that \textsc{$k$-Inversion}, parameterized by $k+\mathrm{tw}$,
admits an FPT algorithm. Our algorithm is explicit and far more efficient than the complexity of the algorithm implied by Courcelle's theorem.

The key idea is simple.
To decide whether there exists a solution for the subgraph induced by
$V_t$ (the union of vertices appearing in the bags of the subtree rooted
at $t$), it suffices to know whether solutions exist for the child
nodes (or children, in the case of join nodes) that impose certain
reachability constraints on pairs of vertices in the bag $X_t$.

To capture this information, we maintain a table
$c(t,P,\mathcal{S})$, where $t$ is a node of the nice tree decomposition,
$P$ is a subset of ordered pairs of vertices from $X_t$,
and $\mathcal{S}$ is a $k$-tuple of subsets of $X_t$.
Intuitively, the table entry $c(t,P,\mathcal{S})$ is set to
\texttt{True} if and only if there exists a solution
$\hat{\mathcal{S}}$, which is a $k$-tuple of subsets of $V_t$,
such that the componentwise intersection of $\hat{\mathcal{S}}$ with
$X_t$ equals $\mathcal{S}$, and for every pair $(u,v)$ of vertices in
$X_t$, there exists a directed path from $u$ to $v$ in
$D[V_t]\oplus \hat{\mathcal{S}}$ if and only if $(u,v)\in P$.


Let $(T,\mathcal{X})$ be a nice tree decomposition of the underlying undirected graph of $D$. 
We denote the bags in $\mathcal{X}$ 
by $X_1, X_2, \ldots$. For a node $X_t \in V(T)$, let $V_t$ denote the union of the vertex sets in the bags corresponding to the nodes in the subtree of $T$ rooted at $X_t$. 

Throughout this section, $\mathcal{S}$ and $\hat{\mathcal{S}}$ denote $k$-tuples. 
The $i$\textsuperscript{th} set in $\mathcal{S}$ and $\hat{\mathcal{S}}$ is denoted by $S_i$ and $\hat{S}_i$, respectively. We write $\mathcal{S} \subseteq \hat{\mathcal{S}}$ (equivalently, $\hat{\mathcal{S}} \supseteq \mathcal{S}$) if $S_i \subseteq \hat{S}_i$ holds for all $1 \leq i \leq k$. Likewise, for a vertex set $X$, we define
$
\hat{\mathcal{S}} \cap X := (\hat{S_1}\cap X, \hat{S_2}\cap X, \ldots, \hat{S_k}\cap X).
$
Similarly, 
$
\hat{\mathcal{S}} \cup X := (\hat{S_1}\cup X, \hat{S_2}\cup X, \ldots, \hat{S_k}\cup X),
$
and 
$
\hat{\mathcal{S}} \setminus X := (\hat{S_1}\setminus X, \hat{S_2}\setminus X, \ldots, \hat{S_k}\setminus X).
$
By $\hat{\mathcal{S}}\cup \mathcal{S}$ we denote the $k$-tuple, $
 (\hat{{S}_1}\cup {S}_1, \hat{S_2}\cup S_2, \ldots, \hat{S_k}\cup S_k).
$
That is, the operation is applied componentwise to each set in the tuple.
The set of all $k$-tuples formed by the subsets of a set $Y$ is denoted by $\mathcal{P}(Y)^k$. 

Let $P_t$ be the set of all ordered pairs of distinct vertices in $X_t$. For every subset $P \subseteq P_t$, and for every $k$-tuple $\mathcal{S}$ of subsets of $X_t$, we define $c(t,P,\mathcal{S})$ as follows:
\[
c(t, P, \mathcal{S}) =
\begin{cases}
\texttt{True}, & 
\begin{aligned}[t]
&\text{if there exists a $k$-tuple } \hat{\mathcal{S}} \in \mathcal{P}(V(D))^{k} \text{ such that } 
\hat{\mathcal{S}} \supseteq \mathcal{S},\\ 
&\hat{\mathcal{S}} \cap X_t = \mathcal{S}, \text{ and }\ D[V_t] \oplus \hat{\mathcal{S}} \text{ is a DAG in which } (u,v) \in P 
\iff \\
&\text{there is a directed path from $u$ to $v$ in } D[V_t] \oplus \hat{\mathcal{S}}.
\end{aligned} \\
\texttt{False}, & \text{otherwise.}
\end{cases}
\]

If there exists a $k$-tuple $\hat{\mathcal{S}}$ satisfying the conditions for $c(t, P, \mathcal{S})$ to be \texttt{True}, we say that $\hat{\mathcal{S}}$ \emph{witnesses} $c(t, P, \mathcal{S}) = \texttt{True}$.

Before getting into the technical details, we recall the following observation.

\begin{observation}[folklore]
    \label{obs:def}
    The graph $D \oplus \mathcal{S}$ has an edge $(u,v)$ if and only if either of the following conditions is true. 
    \begin{enumerate}
        \item $(u,v)$ is an arc in $D$ and $\{u,v\}$ is contained in an even number of sets in $\mathcal{S}$, or
        \item $(v,u)$ is an arc in $D$ and $\{u,v\}$ is contained in an odd number of sets in $\mathcal{S}$.
    \end{enumerate}
\end{observation}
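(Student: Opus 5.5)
This observation follows directly from the definition of inversion and the order-independence remark in the introduction. I would argue arc by arc. Fix an unordered pair $\{u,v\}$ that is adjacent in $D$. Since $D$ is oriented, exactly one of $(u,v)$ and $(v,u)$ is an arc of $D$. If $\{u,v\}$ is not an edge of the underlying graph, then no inversion creates an arc between $u$ and $v$. Both conditions then fail, so the equivalence holds trivially for such pairs.

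For an adjacent pair, I would induct on the length of the tuple $\mathcal{S}=(S_1,\ldots,S_k)$, applying the sets one at a time. The invariant is that after applying $S_1,\ldots,S_j$ the arc between $u$ and $v$ keeps its original direction exactly when $\{u,v\}$ lies in an even number of $S_1,\ldots,S_j$. For $j=0$ this holds because nothing has been applied. In the inductive step, inversion by $S_{j+1}$ reverses the current arc iff $\{u,v\}\subseteq S_{j+1}$, which is exactly when the count goes up by one and so changes parity. Otherwise both the arc and the count are unchanged. Arcs on other pairs are also untouched, since inversion only affects pairs with both endpoints in the set.

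Taking $j=k$ gives the statement. If $(u,v)\in A(D)$, the final arc is $(u,v)$ iff the count is even, which is condition 1. If $(v,u)\in A(D)$, the final arc is $(u,v)$ iff the count is odd, which is condition 2. The two conditions are mutually exclusive because $D$ is oriented. There is no real obstacle here. The only point needing care is the convention that sets in $\mathcal{S}$ may contain vertices outside the considered subgraph, and this changes nothing because only membership of both endpoints matters.
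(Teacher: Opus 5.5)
Your argument is correct and matches the paper, which gives no separate proof and treats the statement as immediate from the definition via the parity remark in the introduction; your induction on the number of applied sets, tracking the parity of the number of $S_i$ containing $\{u,v\}$, just makes that explicit. You do not need to cite order-independence at the start: the paper actually derives order-independence from this parity fact, so leaning on it would be circular, and your induction applies the sets in the given order without using it anyway.
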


We now derive recurrence formulations for $c(t, P, \mathcal{S})$ corresponding to each type of node in the tree decomposition $T$.

\subsection{Leaf nodes}
Let $X_t$ be a leaf node in $T$.
Recall that leaf nodes are empty bags. Therefore, $c(t, \emptyset, \mathcal{S}) =$ \texttt{True}, for the $k$-tuple $\mathcal{S}$ of empty sets.

\subsection{Introduce nodes}




Let $t$ be an introduce node and let $t'$ be its child.
Let $X_t\setminus X_{t'} = \{w\}$.
Assume that there exists a $k$-tuple $\hat{\mathcal{S}}'$ of subsets of
$V_{t'}$ witnessing that
$c(t',P'',\mathcal{S}')=\texttt{True}$.

Since the vertex $w$ is introduced at node $t$, the presence of $w$ may
create additional reachability relations in the graph
$D[V_t]\oplus \hat{\mathcal{S}}'$.
Therefore, in order to determine the value of $c(t,P,\mathcal{S})$,
we consider, for each subset $P''\subseteq P$ that contains no ordered
pair involving $w$, the effect of introducing $w$.

For this purpose, we construct an auxiliary directed graph
$A_{t,P'',\mathcal{S}}$ on the vertex set $X_t$ that captures the
reachability relation induced by
$D[V_t]\oplus \hat{\mathcal{S}}$, where $\hat{\mathcal{S}}$ is a
$k$-tuple of subsets of $V_t$ such that
$\hat{\mathcal{S}}\cap X_t=\mathcal{S}$ componentwise,
the reachability relation on $X_{t'}$ in
$D[V_{t'}]\oplus \hat{\mathcal{S}}$ is exactly $P''$,
and the reachability relation on $X_t$ in
$D[V_t]\oplus \hat{\mathcal{S}}$ is exactly $P$.


Let $P'\subseteq P_{t'}$, and let $\mathcal{S}\in (\mathcal{P}(X_t))^k$.

We define an auxiliary directed graph $A_{t, P', \mathcal{S}}$ as follows.
The vertex set of $A_{t, P', \mathcal{S}}$ is $X_t$ and the arc set is:
\[
E(A_{t, P', \mathcal{S}}) =
P' 
\begin{aligned}[t]
&\cup\{(w,v)\mid v\in X_{t'},\ ((w,v)\in E(D[X_t]) \text{ and } \{w,v\} \text{ is contained} \\
&\phantom{\cup}\text{ in an even number of sets in } \mathcal{S})\ \text{or } ((v,w)\in E(D[X_t]) \text{ and } \\
&\phantom{\cup}\{w,v\} \text{ is contained} \text{ in an odd number of sets in } \mathcal{S})\}\\
&\cup\{(u,w)\mid u\in X_{t'},\ ((u,w)\in E(D[X_t]) \text{ and } \{u,w\} \text{ is contained} \\ 
&\phantom{\cup}\text{in an even number of sets in } \mathcal{S})\ \text{or } ((w,u)\in E(D[X_t]) \text{ and } \\
&\phantom{\cup}\{u,w\} \text{ is contained in an odd number of sets in } \mathcal{S})\}.
\end{aligned}
\]

Let $\hat{\mathcal{S}}\in \mathcal{P}(V_t)^k$ be such that $\hat{\mathcal{S}}\cap X_t = \mathcal{S}$. 
Let $\hat{\mathcal{S}'} = \hat{\mathcal{S}}\setminus \{w\}$ (equivalently, $\hat{\mathcal{S}} \cap V_{t'}$, since $V_t = V_{t'}\cup \{w\}$).

Let $\hat{D} = D[V_t]\oplus \hat{\mathcal{S}}$ and $\hat{D'} = D[V_{t'}]\oplus \hat{\mathcal{S}'}$.
Observe that $\hat{D'}$ is obtained from $\hat{D}$ by deleting vertex $w$.

Let $P'\subseteq P_{t'}$ be the set of pairs $(a,b)$ such that there is a directed path from $a$ to $b$ in $\hat{D'}$.

\begin{observation}
    \label{obs:auxiliary:intro:gen}
    Let $u,v\in X_{t'}$. The arc $(u,w)$ is present in $\hat{D}$ if and only if it is an arc of $A_{t,P',\mathcal{S}}$.
    Similarly, the arc $(w,v)$ is present in $\hat{D}$ if and only if it is an arc of $A_{t,P',\mathcal{S}}$. 
\end{observation}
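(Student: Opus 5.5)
The statement is essentially a direct unfolding of the definitions, so the plan is to compare, arc by arc, the orientation of each edge incident to $w$ in $\hat{D}$ with the corresponding arc set in the definition of $A_{t,P',\mathcal{S}}$, using Observation~\ref{obs:def}. I will treat the pair $(u,w)$ with $u\in X_{t'}$; the case $(w,v)$ is symmetric, obtained by swapping the roles of head and tail.

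First, by Observation~\ref{obs:def} applied to $D[V_t]$ and $\hat{\mathcal{S}}$, the arc $(u,w)$ lies in $\hat{D}$ if and only if either $(u,w)\in E(D[V_t])$ and $\{u,w\}$ is contained in an even number of sets of $\hat{\mathcal{S}}$, or $(w,u)\in E(D[V_t])$ and $\{u,w\}$ is contained in an odd number of sets of $\hat{\mathcal{S}}$. Second, since $u,w\in X_t$, the arcs of $D[V_t]$ between $u$ and $w$ coincide with those of $D[X_t]$, as both are induced subgraphs of $D$. Third, for each $i$, we have $\{u,w\}\subseteq \hat{S}_i$ if and only if $\{u,w\}\subseteq \hat{S}_i\cap X_t = S_i$, because $\hat{\mathcal{S}}\cap X_t=\mathcal{S}$. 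Hence the number of sets of $\hat{\mathcal{S}}$ containing $\{u,w\}$ equals the number of sets of $\mathcal{S}$ containing $\{u,w\}$. Substituting these two facts turns the condition from Observation~\ref{obs:def} into exactly the membership condition for the second arc family in $E(A_{t,P',\mathcal{S}})$.

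One subtlety remains: $E(A_{t,P',\mathcal{S}})$ also contains $P'$, so we must check that $P'$ contributes no pair involving $w$. This holds because $P'\subseteq P_{t'}$ consists of pairs of vertices of $X_{t'}$, and $w\notin X_{t'}$. We must also check that the first family (arcs out of $w$) contains no arc of the form $(u,w)$, which is clear since every arc in it has tail $w$. Therefore $(u,w)\in E(A_{t,P',\mathcal{S}})$ if and only if it satisfies the second family's condition, which completes the argument.

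There is no real obstacle here. The only point needing care is the parity transfer from $\hat{\mathcal{S}}$ to $\mathcal{S}$, which depends on both endpoints lying in the bag $X_t$. That is precisely why the observation is restricted to arcs between $w$ and $X_{t'}$; it relies on the tree-decomposition property that every neighbour of $w$ in $V_t$ lies in $X_t$, though the observation itself only concerns pairs inside the bag.
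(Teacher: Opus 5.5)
Your proposal is correct and follows the paper's proof. Both apply Observation~\ref{obs:def}, transfer the parity from $\hat{\mathcal{S}}$ to $\mathcal{S}$ using $u,w\in X_t$ and $\hat{\mathcal{S}}\cap X_t=\mathcal{S}$, and then, for the converse, use that $P'$ contains no pair involving $w$. Your closing remark about neighbours of $w$ lying in $X_t$ is not needed for this observation (it matters later, in Lemma~\ref{lem:auxiliary:introduce:gen}), but it does no harm.
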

\begin{proof}
    Let $A = A_{t,P',\mathcal{S}}$.
    For the forward direction, assume that $(u,w)$ is an arc of $\hat{D}$. Then by Observation~\ref{obs:def}, either of the following cases holds.
    \begin{enumerate}
        \item $(u,w)$ is an arc of $D[X_t]$ and $\{u,w\}$ is contained in an even number of sets in $\hat{\mathcal{S}}$, or
        \item $(w,u)$ is an arc of $D[X_t]$ and $\{u,w\}$ is contained in an odd number of sets in $\hat{\mathcal{S}}$.
    \end{enumerate}
    Since $u,w\in X_t$ and $\mathcal{S} = \hat{\mathcal{S}}\cap X_t$, the parity of the number of sets containing $\{u,w\}$ is the same in $\mathcal{S}$ and in $\hat{\mathcal{S}}$. Therefore, in either case, $(u,w)\in E(A)$.

    For the backward direction, assume that $(u,w)\in E(A)$. Since no pair in $P'$ contains $w$, the construction of $A$ implies that one of the two cases above holds. Then Observation~\ref{obs:def} implies that $(u,w)$ is an arc of $\hat{D}$. The statement for arcs of the form $(w,v)$ is proved analogously.
\end{proof}

\begin{observation}
    \label{obs:auxiliary:intro:gen:2}
    For every pair $(u,v)\in P_{t'}$, there is a directed path from $u$ to $v$ in $\hat{D'}$
    if and only if there is a directed path from $u$ to $v$ in $A_{t, P', \mathcal{S}} - w$.
\end{observation}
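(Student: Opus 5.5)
The claim is that, for $(u,v)\in P_{t'}$, reachability from $u$ to $v$ in $\hat{D'}$ coincides with reachability in $A_{t,P',\mathcal{S}}-w$. The plan is to first identify the arc set of $A_{t,P',\mathcal{S}}-w$ exactly. Deleting $w$ removes all arcs of the second and third kinds in the definition, since each of those has $w$ as an endpoint. What remains is precisely $P'$, viewed as an arc set on $X_{t'}$, because $P'\subseteq P_{t'}$ contains no pair involving $w$. So $A_{t,P',\mathcal{S}}-w$ is the digraph on $X_{t'}$ (together with the isolated vertex set difference, which is empty) whose arcs are the pairs in $P'$.

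With that identification, the two directions go as follows.

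For the forward direction, suppose there is a directed path from $u$ to $v$ in $\hat{D'}$. Then $(u,v)\in P'$ by the definition of $P'$. Hence $(u,v)$ is itself an arc of $A_{t,P',\mathcal{S}}-w$, and so it is a path of length one.

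For the backward direction, suppose there is a directed path $u=x_0,x_1,\ldots,x_m=v$ in $A_{t,P',\mathcal{S}}-w$. Each arc $(x_{i-1},x_i)$ lies in $P'$, so there is a directed path from $x_{i-1}$ to $x_i$ in $\hat{D'}$. Concatenating these paths gives a directed walk from $u$ to $v$ in $\hat{D'}$. A directed walk contains a directed path between the same endpoints, obtained by shortcutting repeated vertices. Since $u\neq v$ for pairs in $P_{t'}$, this yields the desired path. Equivalently, reachability is transitive, so $P'$ is transitively closed and the path collapses to the single pair $(u,v)\in P'$.

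There is no real obstacle. The only point needing care is to check that deleting $w$ leaves exactly $P'$: no arc of $A_{t,P',\mathcal{S}}$ other than those involving $w$ comes from the $\mathcal{S}$-dependent parts of the definition. This follows immediately from how the definition is written.
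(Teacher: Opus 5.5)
Your proposal is correct and follows essentially the same argument as the paper. Both identify the arcs of $A_{t,P',\mathcal{S}}-w$ as exactly the pairs in $P'$, use the single arc $(u,v)$ for the forward direction, and concatenate the $\hat{D'}$-paths witnessing each arc for the backward direction. Your extra remark, that the concatenation is a walk which shortcuts to a path (or equivalently that $P'$ is closed under composition), is a small point of care the paper's proof glosses over.
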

\begin{proof}
    Let $A = A_{t, P', \mathcal{S}}$.
    For the forward direction, assume that there is a directed path from $u$ to $v$ in $\hat{D'}$. Then by the definition of $P'$, we have $(u,v)\in P'$. Hence $(u,v)\in E(A)$, and since $w\notin\{u,v\}$, the arc $(u,v)$ is present in $A-w$, yielding a directed path from $u$ to $v$ in $A-w$.

    For the backward direction, assume that there is a directed path $u x_1 x_2\ldots x_q v$ from $u$ to $v$ in $A - w$. Every arc on this path has both endpoints in $X_{t'}$. By construction, the only arcs of $A$ with both endpoints in $X_{t'}$ are those in $P'$. Therefore,
    \[
        \{(u,x_1), (x_1,x_2),\ldots, (x_{q-1},x_q),(x_q,v)\}\subseteq P'.
    \]
    By the definition of $P'$, for each arc $(y,z)$ in this set there is a directed path from $y$ to $z$ in $\hat{D'}$. Concatenating these directed paths yields a directed path from $u$ to $v$ in $\hat{D'}$.
\end{proof}

\begin{lemma}
    \label{lem:auxiliary:introduce:gen}
    For every pair $(u,v)\in P_t$, there is a directed path from $u$ to $v$ in $\hat{D}$
    if and only if there is a directed path from $u$ to $v$ in $A_{t,P',\mathcal{S}}$.
\end{lemma}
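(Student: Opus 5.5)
The plan is to prove both directions by transforming directed paths between the two graphs, using Observations~\ref{obs:auxiliary:intro:gen} and~\ref{obs:auxiliary:intro:gen:2}. Let $A = A_{t,P',\mathcal{S}}$. The structural fact we rely on is that $w$ is introduced at $t$. By the properties of a tree decomposition, every neighbour of $w$ in $D[V_t]$ lies in $X_t$, hence in $X_{t'}$. Consequently every arc of $\hat{D}$ incident to $w$ has its other endpoint in $X_{t'}$, so Observation~\ref{obs:auxiliary:intro:gen} describes all of them.

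\emph{Forward direction.} Take a directed path $Q$ from $u$ to $v$ in $\hat{D}$. If $Q$ avoids $w$, it is a path in $\hat{D'}$. If $u,v\in X_{t'}$, then $(u,v)\in P'$ by the definition of $P'$ (or we apply Observation~\ref{obs:auxiliary:intro:gen:2}), so $A$ has the arc $(u,v)$. Otherwise $Q$ visits $w$ exactly once, since it is a path, and it splits as $Q_1\,w\,Q_2$. Here $Q_1$ ends with an arc $(x,w)$ and $Q_2$ starts with an arc $(w,y)$, where $x,y\in X_{t'}$ by the neighbourhood fact. The subpaths from $u$ to $x$ and from $y$ to $v$ avoid $w$, so they lie in $\hat{D'}$ and yield arcs $(u,x),(y,v)\in P'$, or are trivial when $u=x$ or $y=v$. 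The arcs $(x,w)$ and $(w,y)$ are arcs of $A$ by Observation~\ref{obs:auxiliary:intro:gen}. Concatenating gives a walk from $u$ to $v$ in $A$, which contains a path. The cases $u=w$ and $v=w$ are the same argument with one of the two halves empty.

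\emph{Backward direction.} Take a directed path from $u$ to $v$ in $A$. Each of its arcs is either a pair in $P'$, which we replace by a directed path in $\hat{D'}\subseteq\hat{D}$ using the definition of $P'$, or an arc incident to $w$, which is an arc of $\hat{D}$ by Observation~\ref{obs:auxiliary:intro:gen}. The concatenation is a directed walk from $u$ to $v$ in $\hat{D}$, and therefore contains a directed path.

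The only point needing care, and the main obstacle, is the structural fact that all $\hat{D}$-neighbours of $w$ lie in $X_{t'}$. We justify it from the tree decomposition axioms: $w\notin V_{t'}$, and every edge of $D[V_t]$ is covered by some bag in the subtree rooted at $t$. An edge $wz$ can only be covered by a bag containing $w$. The only such bag in this subtree is $X_t$, because $w$ does not occur in the subtree of $t'$. Hence $z\in X_t\setminus\{w\}=X_{t'}$.
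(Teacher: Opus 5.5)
Your proof is correct and follows the paper's argument. The forward direction splits the path at its unique visit to $w$ and uses Observation~\ref{obs:auxiliary:intro:gen} together with the definition of $P'$, exactly as the paper does. Your backward direction replaces each arc of $A$ individually: an arc of $P'$ becomes a path in $\hat{D'}$, and an arc at $w$ becomes the corresponding arc of $\hat{D}$. This is a slightly cleaner form of the paper's case split via Observation~\ref{obs:auxiliary:intro:gen:2}. You also state and justify explicitly that every neighbour of $w$ in $D[V_t]$ lies in $X_{t'}$; the paper uses this without comment when it picks $u',v'\in X_{t'}$ adjacent to $w$ on the path.
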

\begin{proof}
    Let $A = A_{t,P',\mathcal{S}}$.

    \smallskip
    \noindent\emph{Forward direction.}
    Assume that there is a directed path from $u$ to $v$ in $\hat{D}$.
    Since all directed paths are simple, the vertex $w$ appears on the path at most once.

    \smallskip
    \noindent\textbf{Case 1:} $u,v\in X_{t'}$.

    If $(u,v)\in P'$, then by construction $(u,v)$ is an arc of $A$, and we are done.
    Now assume that $(u,v)\notin P'$. Then there is no directed path from $u$ to $v$ in $\hat{D'}$.
    Hence the (simple) directed path from $u$ to $v$ in $\hat{D}$ must pass through $w$ exactly once.

    Let $u',v'\in X_{t'}$ be such that $(u',w)$ and $(w,v')$ are arcs of $\hat{D}$,
    and there is a directed path from $u$ to $u'$ in $\hat{D'}$ and a directed path from $v'$ to $v$ in $\hat{D'}$
    (where $u'=u$ or $v'=v$ is allowed).
    By definition of $P'$, we have $u'=u$ or $(u,u')\in P'$, and $v'=v$ or $(v',v)\in P'$.
    By Observation~\ref{obs:auxiliary:intro:gen}, the arcs $(u',w)$ and $(w,v')$ belong to $A$.
    Hence $A$ contains a directed path from $u$ to $v$.

    \smallskip
    \noindent\textbf{Case 2:} $u=w$.

    Since the path is simple, it starts with an arc $(w,v')$ for some $v'\in X_{t'}$.
    Either $v'=v$ or there is a directed path from $v'$ to $v$ in $\hat{D'}$.
    By Observation~\ref{obs:auxiliary:intro:gen}, $(w,v')$ is an arc of $A$,
    and if $v'\neq v$, then $(v',v)\in P'$ and hence there is a directed path from $v'$ to $v$ in $A-w$.
    Thus there is a directed path from $u$ to $v$ in $A$.

    \smallskip
    \noindent\textbf{Case 3:} $v=w$.

    This case is symmetric to Case~2.

    \smallskip
    \noindent\emph{Backward direction.}
    Assume that there is a directed path from $u$ to $v$ in $A$.

    \smallskip
    \noindent\textbf{Case 1:} $u,v\in X_{t'}$.

    If $(u,v)\in P'$, then by definition of $P'$ there is a directed path from $u$ to $v$ in $\hat{D'}$,
    and hence also in $\hat{D}$.
    Now assume that $(u,v)\notin P'$. Then by Observation~\ref{obs:auxiliary:intro:gen:2},
    there is no directed path from $u$ to $v$ in $A-w$.
    Hence the directed path from $u$ to $v$ in $A$ must pass through $w$ exactly once.

    Let $u',v'\in X_{t'}$ be such that $(u',w)$ and $(w,v')$ are arcs of $A$,
    and there is a directed path from $u$ to $u'$ in $A-w$ and from $v'$ to $v$ in $A-w$
    (allowing $u'=u$ or $v'=v$).
    By Observation~\ref{obs:auxiliary:intro:gen:2}, the paths in $A-w$ correspond to directed paths in $\hat{D'}$,
    and by Observation~\ref{obs:auxiliary:intro:gen}, the arcs $(u',w)$ and $(w,v')$ are arcs of $\hat{D}$.
    Concatenating these yields a directed path from $u$ to $v$ in $\hat{D}$.

    \smallskip
    \noindent\textbf{Case 2:} $u=w$.

    Then the directed path in $A$ starts with an arc $(w,v')$ for some $v'\in X_{t'}$.
    By Observation~\ref{obs:auxiliary:intro:gen}, this arc belongs to $\hat{D}$.
    If $v'\neq v$, then by Observation~\ref{obs:auxiliary:intro:gen:2} there is a directed path from $v'$ to $v$ in $\hat{D'}$.
    Hence there is a directed path from $u$ to $v$ in $\hat{D}$.

    \smallskip
    \noindent\textbf{Case 3:} $v=w$.

    This case is symmetric to Case~2.
\end{proof}

Algorithm~\ref{alg:introduce} computes (and returns) the value of $c(t, P, \mathcal{S})$ from the values computed for $X_{t'}$.

\begin{algorithm}[H]
\caption{Compute $c(t, P, \mathcal{S})$ for an introduce node $X_t$}
\label{alg:introduce}
\SetNlSty{textbf}{(}{)}
\DontPrintSemicolon
\KwIn{$t, P, \mathcal{S}$}
\KwOut{$c(t, P, \mathcal{S})$}

$P^\star \gets \{\, (u,v) \in P \mid u \neq w \text{ and } v \neq w \,\}$\;
$\overline{P}\gets P_t\setminus P$\;
$\mathcal{S}'\gets \mathcal{S}\setminus \{w\}$\;

\For{each $P''\subseteq P^\star$}{
    \If{not $c(t', P'', \mathcal{S}')$}{
        skip (this choice of $P''$ and continue with the next iteration)
    }
    
    Create the auxiliary graph $A = A_{t, P'', \mathcal{S}}$\;
    \If{$A$ is not a DAG}{
        skip (this choice of $P''$ and continue with the next iteration)
    }
    \For{each pair $(u,v)\in P$}{
        \If{there is no directed path from $u$ to $v$ in $A$}{
            skip (this choice of $P''$ and continue with the next iteration of the outer loop)
        }
    }
    \For{each pair $(u,v)\in \overline{P}$}{
        \If{there is a directed path from $u$ to $v$ in $A$}{
            skip (this choice of $P''$ and continue with the next iteration of the outer loop)
        }
    }
    \Return{\texttt{True}}
}
\Return{\texttt{False}}
\end{algorithm}

\begin{lemma}
    \label{lem:introduce}
    If the correct value of $c(t', Q, \mathcal{Z})$ is available for every $Q\subseteq P_{t'}$ and every $\mathcal{Z}\in \mathcal{P}(X_{t'})^k$,
    then Algorithm~\ref{alg:introduce} correctly computes $c(t, P, \mathcal{S})$ for an introduce node $X_t$ in time $2^{|X_t|^2}\cdot |X_t|^{O(1)}$.
\end{lemma}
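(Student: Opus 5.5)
We prove the two directions of correctness separately, each time using Lemma~\ref{lem:auxiliary:introduce:gen} to relate reachability in $D[V_t]\oplus\hat{\mathcal{S}}$ to reachability in the auxiliary graph. The running time is handled at the end.

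\emph{Soundness.} Suppose the algorithm returns \texttt{True} for some $P''\subseteq P^\star$. Since $c(t',P'',\mathcal{S}')=\texttt{True}$, there is a witness $\hat{\mathcal{S}}'$; we may assume it consists of subsets of $V_{t'}$, as vertices outside $V_{t'}$ do not affect $D[V_{t'}]$. Define $\hat{\mathcal{S}}$ by $\hat S_i=\hat S_i'\cup(S_i\cap\{w\})$. Then $\hat{\mathcal{S}}\cap X_t=\mathcal{S}$, because $\hat{\mathcal{S}}'\cap X_{t'}=\mathcal{S}'$. Moreover $\hat{\mathcal{S}}\setminus\{w\}=\hat{\mathcal{S}}'$, so in the notation preceding Observation~\ref{obs:auxiliary:intro:gen} the reachability relation $P'$ of $\hat{D'}$ on $X_{t'}$ is exactly $P''$.

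By Lemma~\ref{lem:auxiliary:introduce:gen}, for every pair in $P_t$, reachability in $\hat D$ coincides with reachability in $A=A_{t,P'',\mathcal{S}}$. The two path checks in the algorithm therefore give that $P$ is exactly the reachability relation of $\hat D$ on $X_t$.

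It remains to show $\hat D$ is acyclic. Any cycle of $\hat D$ avoiding $w$ lies in $\hat{D'}$, which is a DAG by the witness property. Any cycle through $w$ has the form $w\to v'\leadsto u'\to w$ with $u',v'\in X_{t'}$; the inner path is $w$-free, or trivial if $u'=v'$. By Observations~\ref{obs:auxiliary:intro:gen} and~\ref{obs:auxiliary:intro:gen:2} this cycle maps to a closed walk through $w$ in $A$, contradicting that $A$ is a DAG. Here we use that $w$'s neighbours in $\hat D$ inside $V_{t'}$ all lie in $X_{t'}$, a basic tree decomposition property. Hence $\hat{\mathcal{S}}$ witnesses $c(t,P,\mathcal{S})=\texttt{True}$.

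\emph{Completeness.} Let $\hat{\mathcal{S}}$ witness $c(t,P,\mathcal{S})$, and set $\hat{\mathcal{S}}'=\hat{\mathcal{S}}\setminus\{w\}$. Let $P'$ be the reachability relation of $\hat{D'}=\hat D-w$ on $X_{t'}$; it is a DAG since $\hat D$ is. Then $\hat{\mathcal{S}}'$ witnesses $c(t',P',\mathcal{S}')=\texttt{True}$, and $P'\subseteq P^\star$ because $w$-free paths are also paths of $\hat D$.

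When the loop reaches $P''=P'$, Lemma~\ref{lem:auxiliary:introduce:gen} shows that both path checks pass. For acyclicity of $A$, a cycle in $A$ avoiding $w$ would use only $P'$-arcs. Concatenating the corresponding paths of $\hat{D'}$ would give $(x,x)$-reachability, which is impossible in a DAG. A cycle through $w$ lifts, by Observations~\ref{obs:auxiliary:intro:gen} and~\ref{obs:auxiliary:intro:gen:2}, to a closed walk through $w$ in $\hat D$, which contains a cycle. So the algorithm returns \texttt{True}.

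The main obstacle is this cycle argument, since Lemma~\ref{lem:auxiliary:introduce:gen} only addresses reachability between distinct bag vertices. The route above, via closed walks, avoids needing a separate acyclicity lemma.

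\emph{Running time.} There are at most $2^{|P^\star|}\le 2^{|X_t|^2}$ choices of $P''$. For each, building $A$, testing whether it is a DAG, and checking reachability for all $O(|X_t|^2)$ pairs takes $|X_t|^{O(1)}$ time, with table lookups in constant time. This gives the bound $2^{|X_t|^2}\cdot |X_t|^{O(1)}$.
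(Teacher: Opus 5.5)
Your proof is correct and follows essentially the same route as the paper: in both directions you reduce reachability to Lemma~\ref{lem:auxiliary:introduce:gen} and build the soundness witness as $\hat{\mathcal{S}}'\cup\mathcal{S}$. You also prove acyclicity of $\hat D$ by turning a cycle through $w$ into a cycle in $A$, as the paper does. Your closed-walk argument that $A$ is a DAG in the completeness direction, and your explicit use of the fact that $w$ has no neighbours in $V_{t'}\setminus X_{t'}$, spell out two steps the paper only asserts or uses implicitly.
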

\begin{proof}
    We prove that $c(t,P,\mathcal{S})$ is \texttt{True} if and only if the algorithm returns \texttt{True}.

    \smallskip
    \noindent\emph{Forward direction.}
    Assume that $c(t,P,\mathcal{S})$ is \texttt{True}. Let $\hat{\mathcal{S}}$ be a $k$-tuple that witnesses this, and let $\hat{\mathcal{S}'}=\hat{\mathcal{S}}\setminus\{w\}$.
    Set $\hat{D}=D[V_t]\oplus \hat{\mathcal{S}}$ and $\hat{D'}=D[V_{t'}]\oplus \hat{\mathcal{S}'}$.
    Let $P''$ be the set of pairs $(u,v)\in P_{t'}$ such that there is a directed path from $u$ to $v$ in $\hat{D'}$.
    Then $P''\subseteq P^\star$ because $\hat{D'}$ is an induced subgraph of $\hat{D}$, and $P^\star$ is exactly $P$ restricted to pairs avoiding $w$.

    Since $\hat{\mathcal{S}'}$ witnesses $c(t',P'',\mathcal{S}')=\texttt{True}$, the first skip is not taken.
    Moreover, since $\hat{D}$ is a DAG, Lemma~\ref{lem:auxiliary:introduce:gen} implies that $A_{t,P'',\mathcal{S}}$ is a DAG, so the DAG-check skip is not taken.

    Finally, by Lemma~\ref{lem:auxiliary:introduce:gen}, for every $(u,v)\in P$ there is a directed path from $u$ to $v$ in $\hat{D}$
    if and only if there is a directed path from $u$ to $v$ in $A$, and similarly for $(u,v)\in \overline{P}$.
    Hence both path-consistency loops pass, and the algorithm returns \texttt{True}.

    \smallskip
        \noindent\emph{Backward direction.}
Assume that the algorithm returns \texttt{True}.
We prove that $c(t,P,\mathcal{S})$ is \texttt{True}.

Since the algorithm returns \texttt{True}, there exists a set $P''\subseteq P'$ such that none of the \texttt{skip} statements
is executed.
Since the skip in line~6 is not taken, we have $c(t',P'',\mathcal{S}')=\texttt{True}$.
Let $\hat{\mathcal{S}}'\in\mathcal{P}(V_{t'})^k$ be a $k$-tuple witnessing this fact, and let
$\hat{D}' = D[V_{t'}]\oplus \hat{\mathcal{S}}'$.
Then $\hat{D}'$ is a DAG.

Define $\hat{\mathcal{S}}=\hat{\mathcal{S}}'\cup\mathcal{S}$.
Since $\hat{\mathcal{S}}'\cap X_{t'}=\mathcal{S}'$, it follows that
$\hat{\mathcal{S}}\cap X_t=\mathcal{S}$, recalling that $X_t=X_{t'}\cup\{w\}$ and
$\mathcal{S}'=\mathcal{S}\setminus\{w\}$.
Let $\hat{D}=D[V_t]\oplus \hat{\mathcal{S}}$ and let $A=A_{t,P'',\mathcal{S}}$.

Since the skip in line~9 is not taken, the auxiliary graph $A$ is a DAG.
Moreover, $\hat{D}'=\hat{D}-w$ is a DAG.
Hence, if $\hat{D}$ contains a directed cycle, then every such cycle must contain the vertex $w$.

Assume for contradiction that $\hat{D}$ contains a directed cycle $C$.
Let $C$ be a shortest directed cycle in $\hat{D}$.
Then $w\in V(C)$, and there exist vertices $u,v\in X_{t'}$ such that $(v,w)$ and $(w,u)$ are arcs of $C$.
Let $Q$ be the directed subpath of $C$ from $u$ to $v$ that avoids $w$.
By minimality of $C$, $Q$ is a simple directed path entirely contained in $\hat{D}'$.
Thus, $(u,v)\in P''$ by definition of $P''$.

By Observation~\ref{obs:auxiliary:intro:gen}, the arcs $(v,w)$ and $(w,u)$ belong to $A$.
Since $(u,v)\in P''\subseteq E(A)$, the graph $A$ contains the directed triangle
$u\to v\to w\to u$, contradicting the fact that $A$ is a DAG.
Hence, $\hat{D}$ is a DAG.

Now, since the skip in line~12 is not taken, for every pair $(u,v)\in P$ there is a directed path
from $u$ to $v$ in $A$.
By Lemma~\ref{lem:auxiliary:introduce:gen}, this implies that there is a directed path
from $u$ to $v$ in $\hat{D}$.

Similarly, since the skip in line~15 is not taken, for every pair $(u,v)\in \overline{P}$
there is no directed path from $u$ to $v$ in $A$.
Again by Lemma~\ref{lem:auxiliary:introduce:gen}, there is no directed path
from $u$ to $v$ in $\hat{D}$.

Therefore, $\hat{\mathcal{S}}$ witnesses that $c(t,P,\mathcal{S})=\texttt{True}$.
The complexity follows from the fact that $P^*$ has at most $|X_t|^2$ pairs and everything inside the 
main loop can be done in $|X_t|^{O(1)}$-time.
This completes the proof.
\end{proof}

\subsection{Forget nodes}
Let $t$ be a forget node and let $t'$ be its child, with
$X_{t'}\setminus X_t=\{w\}$.
To witness that $c(t,P,\mathcal{S})=\texttt{True}$, it suffices to have a
solution $\hat{\mathcal{S}}$ for $D[V_t]$ which is also a solution for
$D[V_{t'}]$ (note that $D[V_t]=D[V_{t'}]$), such that the reachability
relation on $X_{t'}$ may contain additional ordered pairs involving $w$,
and the componentwise intersection of $\hat{\mathcal{S}}$ with $X_{t'}$
may include $w$ in some of the sets.


Algorithm~\ref{alg:forget} computes (and returns) the value of $c(t,P,\mathcal{S})$
from the values computed for $X_{t'}$.

\begin{algorithm}[H]
\caption{Compute $c(t, P, \mathcal{S})$ for a forget node $X_t$}
\label{alg:forget}
\SetNlSty{textbf}{(}{)}
\DontPrintSemicolon
\KwIn{$t, P, \mathcal{S}$}
\KwOut{$c(t, P, \mathcal{S})$}

$R \gets P_{t'}\setminus P_t$\tcp*{pairs in $P_{t'}$ that contain $w$}
\For{each subset $R'\subseteq R$}{
    $P' \gets P \cup R'$\;
    \For{each $\mathcal{W}\in (\{\{w\},\emptyset\})^k$}{
        \If{$c(t', P', \mathcal{S}\cup \mathcal{W})$}{
            \Return{\texttt{True}}\;
        }
    }
}
\Return{\texttt{False}}\;
\end{algorithm}

\begin{lemma}
\label{lem:forget}
If the correct value of $c(t',Q,\mathcal{Z})$ is available for every $Q\subseteq P_{t'}$
and every $\mathcal{Z}\in \mathcal{P}(X_{t'})^k$, then Algorithm~\ref{alg:forget}
correctly computes $c(t,P,\mathcal{S})$ for a forget node $X_t$ in time $O(2^{2|X_t|+k})$.
\end{lemma}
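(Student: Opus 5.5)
We will show that $c(t,P,\mathcal{S})=\texttt{True}$ exactly when some $R'\subseteq R$ and some $\mathcal{W}\in(\{\{w\},\emptyset\})^k$ give $c(t',P\cup R',\mathcal{S}\cup\mathcal{W})=\texttt{True}$. The argument rests on two facts. First, $V_t=V_{t'}$, so $D[V_t]=D[V_{t'}]$. Second, $X_t=X_{t'}\setminus\{w\}$, so $P_t\subseteq P_{t'}$ and $R=P_{t'}\setminus P_t$ is exactly the set of ordered pairs involving $w$.

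\emph{Forward direction.} Suppose $\hat{\mathcal{S}}$ witnesses $c(t,P,\mathcal{S})=\texttt{True}$.
\begin{itemize}
\item Define $\mathcal{W}$ componentwise by $W_i=\{w\}$ if $w\in\hat{S}_i$ and $W_i=\emptyset$ otherwise.
\item Then $\hat{\mathcal{S}}\cap X_{t'}=(\hat{\mathcal{S}}\cap X_t)\cup\mathcal{W}=\mathcal{S}\cup\mathcal{W}$, and $\hat{\mathcal{S}}\supseteq\mathcal{S}\cup\mathcal{W}$.
\item The graph $D[V_{t'}]\oplus\hat{\mathcal{S}}$ is the same DAG as $D[V_t]\oplus\hat{\mathcal{S}}$.
\item Let $R'\subseteq R$ be the set of pairs involving $w$ that are joined by a directed path in this DAG. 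Then the full reachability relation on $X_{t'}$ is $P\cup R'$, because on pairs avoiding $w$ it agrees with $P$ by hypothesis.
\end{itemize}
So $\hat{\mathcal{S}}$ witnesses $c(t',P\cup R',\mathcal{S}\cup\mathcal{W})$, and the algorithm finds this choice of $R'$ and $\mathcal{W}$ and returns \texttt{True}.

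\emph{Backward direction.} Suppose the algorithm returns \texttt{True} for some $R'$ and $\mathcal{W}$, and let $\hat{\mathcal{S}}$ witness $c(t',P\cup R',\mathcal{S}\cup\mathcal{W})$.
\begin{itemize}
\item Intersecting with $X_t$ gives $\hat{\mathcal{S}}\cap X_t=(\mathcal{S}\cup\mathcal{W})\setminus\{w\}=\mathcal{S}$, since $\mathcal{S}$ contains no $w$. Also $\hat{\mathcal{S}}\supseteq\mathcal{S}$.
\item The graph $D[V_t]\oplus\hat{\mathcal{S}}$ is a DAG.
\item For $(u,v)\in P_t$, a directed path from $u$ to $v$ exists iff $(u,v)\in P\cup R'$. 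Since $(u,v)$ avoids $w$, this holds iff $(u,v)\in P$.
\end{itemize}
Hence $\hat{\mathcal{S}}$ witnesses $c(t,P,\mathcal{S})=\texttt{True}$.

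\emph{Running time.} The only care needed is to check that $R$ is exactly the set of pairs containing $w$, and that $P\subseteq P_t$ is disjoint from $R$ so the reachability split is clean. Since $|R|=2|X_t|$, there are $2^{2|X_t|}$ choices of $R'$ and $2^k$ choices of $\mathcal{W}$. Each iteration is one table lookup, which gives the bound $O(2^{2|X_t|+k})$.
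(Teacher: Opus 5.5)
Your proof is correct and follows essentially the same route as the paper: you reuse the same witness $\hat{\mathcal{S}}$ in both directions via $V_t=V_{t'}$, read off $R'$ and $\mathcal{W}$ from the witness's reachability on pairs involving $w$ and from which sets contain $w$, and bound the work by $2^{2|X_t|}\cdot 2^k$ table lookups. The only small addition is that you check the containment $\hat{\mathcal{S}}\supseteq\mathcal{S}\cup\mathcal{W}$ explicitly, which the paper leaves implicit.
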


\begin{proof}
We prove that $c(t,P,\mathcal{S})$ is \texttt{True} if and only if
Algorithm~\ref{alg:forget} returns \texttt{True}.

\smallskip
\noindent{Forward direction.}
Assume that $c(t,P,\mathcal{S})$ is \texttt{True}.
Let $\hat{\mathcal{S}}\in \mathcal{P}(V_t)^k$ witness this, and let
\[
\hat{D} \;=\; D[V_t]\oplus \hat{\mathcal{S}}.
\]
Since $X_t$ is a forget node with child $X_{t'}$ and $X_{t'}\setminus X_t=\{w\}$,
we have $V_{t'} = V_t$. Hence,
\[
\hat{D}' \;:=\; D[V_{t'}]\oplus \hat{\mathcal{S}} \;=\; \hat{D}.
\]

Let $P^\star$ be the set of all pairs $(u,v)\in P_{t'}$ such that there is a directed path
from $u$ to $v$ in $\hat{D}'$.
By the definition of $c(t,P,\mathcal{S})$, for every pair $(u,v)\in P_t$,
\[
(u,v)\in P \iff \text{there is a directed path from $u$ to $v$ in $\hat{D}$}.
\]
Since $\hat{D}'=\hat{D}$, it follows that $P^\star\cap P_t = P$.

Let $R$ be the set computed in line~1 of Algorithm~\ref{alg:forget}, that is,
$R = P_{t'}\setminus P_t$, and define $R' := P^\star\setminus P$.
Then $R'\subseteq R$, and $P^\star = P\cup R' = P'$.

Next, let $\mathcal{S}^\star := \hat{\mathcal{S}}\cap X_{t'}$.
Since $X_{t'} = X_t\cup\{w\}$ and $\hat{\mathcal{S}}\cap X_t = \mathcal{S}$,
for each $i\in\{1,\ldots,k\}$ we have
$S^\star_i\in\{S_i,\; S_i\cup\{w\}\}$.
Hence there exists $\mathcal{W}\in(\{\{w\},\emptyset\})^k$ such that
$\mathcal{S}\cup \mathcal{W} = \mathcal{S}^\star$ (componentwise union).

Therefore, $\hat{\mathcal{S}}$ witnesses that
$c(t',P^\star,\mathcal{S}\cup\mathcal{W})$ is \texttt{True}.
In the iteration of Algorithm~\ref{alg:forget} corresponding to this choice of
$R'$ and $\mathcal{W}$, the algorithm returns \texttt{True}.

\smallskip
\noindent{Backward direction.}
Assume that Algorithm~\ref{alg:forget} returns \texttt{True}.
Then there exist $R'\subseteq R$ and $\mathcal{W}\in(\{\{w\},\emptyset\})^k$ such that
$c(t',P',\mathcal{S}\cup\mathcal{W})$ is \texttt{True}, where $P' = P\cup R'$.

Let $\hat{\mathcal{S}}\in \mathcal{P}(V_{t'})^k$ witness this, and let
\[
\hat{D}' \;=\; D[V_{t'}]\oplus \hat{\mathcal{S}}.
\]
Since $\hat{\mathcal{S}}$ witnesses $c(t',P',\mathcal{S}\cup\mathcal{W})$, we have:
(i) $\hat{D}'$ is a DAG, and
(ii) for every $(u,v)\in P_{t'}$,
\[
(u,v)\in P' \iff \text{there is a directed path from $u$ to $v$ in $\hat{D}'$},
\]
and moreover $\hat{\mathcal{S}}\cap X_{t'} = \mathcal{S}\cup\mathcal{W}$.
In particular, since $\mathcal{W}$ only possibly adds $w$ (componentwise),
we obtain $\hat{\mathcal{S}}\cap X_t = \mathcal{S}$.

Because $V_t = V_{t'}$, define $\hat{D} := D[V_t]\oplus \hat{\mathcal{S}}$.
Then $\hat{D}=\hat{D}'$, and hence $\hat{D}$ is a DAG.

Finally, restrict the equivalence above to pairs $(u,v)\in P_t$.
Since $P' = P\cup R'$ and every pair in $R'$ contains $w$, we have $P'\cap P_t = P$.
Therefore, for every $(u,v)\in P_t$,
\[
(u,v)\in P \iff \text{there is a directed path from $u$ to $v$ in $\hat{D}$}.
\]
Thus, $\hat{\mathcal{S}}$ witnesses that $c(t,P,\mathcal{S})$ is \texttt{True}.
The complexity follows from the fact that $R$ has at most $2|X_t|$ pairs and the inner loop
runs in $2^k$ times.
\end{proof}

\subsection{Join nodes}
Let $t$ be a join node, and let $t_1$ and $t_2$ be its children.
For a $k$-tuple $\hat{\mathcal{S}}$ and a pair of vertices
$u,v\in X_t$, any directed path from $u$ to $v$ in
$D[V_t]\oplus \hat{\mathcal{S}}$ can be decomposed into a sequence of
directed subpaths, where each subpath is entirely contained in either
the subgraph induced by $V_{t_1}$ or the subgraph induced by $V_{t_2}$.

To capture this behavior, for reachability relations $Q_1$ and $Q_2$
corresponding to the children $t_1$ and $t_2$, respectively, we compute
the transitive closure of $Q_1\cup Q_2$ and construct an auxiliary graph
$B_{t,Q_1,Q_2}$ whose arc set corresponds to this transitive closure.

Recall that $X_t = X_{t_1} = X_{t_2}$ and hence $P_t = P_{t_1} = P_{t_2}$.

Let $Q_1, Q_2 \subseteq P_t$, and let $\mathcal{S} \in \mathcal{P}(X_t)^k$.

Recall that the \emph{transitive closure} $R^+$ of a binary relation $R$ is the smallest
transitive relation containing $R$; that is, $(u,v),(v,w)\in R^+$ implies $(u,w)\in R^+$.

Let $Q = Q_1 \cup Q_2$, and let $Q^+$ denote the transitive closure of $Q$.

We define an auxiliary directed graph $B_{t,Q_1,Q_2}$ as follows.
The vertex set of $B_{t,Q_1,Q_2}$ is $X_t$, and its arc set is $Q^+$.

Let $\hat{\mathcal{S}}\in \mathcal{P}(V_t)^k$ be such that
$\hat{\mathcal{S}}\cap X_t = \mathcal{S}$.
Let $\hat{\mathcal{S}}_1 = \hat{\mathcal{S}}\cap V_{t_1}$ and
$\hat{\mathcal{S}}_2 = \hat{\mathcal{S}}\cap V_{t_2}$.

Define
\[
\hat{D} = D[V_t]\oplus \hat{\mathcal{S}}, \quad
\hat{D}_1 = D[V_{t_1}]\oplus \hat{\mathcal{S}}_1, \quad
\hat{D}_2 = D[V_{t_2}]\oplus \hat{\mathcal{S}}_2 .
\]

\begin{lemma}
\label{lem:auxiliary:join}
Assume that for every pair $(u,v)\in P_t$,
there is a directed path from $u$ to $v$ in $\hat{D}_1$
if and only if $(u,v)\in Q_1$, and
there is a directed path from $u$ to $v$ in $\hat{D}_2$
if and only if $(u,v)\in Q_2$.
Then, for every pair $(u,v)\in P_t$,
there is a directed path from $u$ to $v$ in $\hat{D}$
if and only if there is an arc from $u$ to $v$ in $B_{t, Q_1, Q_2}$.
\end{lemma}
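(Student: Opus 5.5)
The plan is to prove the two directions separately, using the separation property of tree decompositions at a join node: $V_{t_1}\cap V_{t_2}=X_t$, and there is no edge of $D$ between $V_{t_1}\setminus X_t$ and $V_{t_2}\setminus X_t$. First I record that $\hat{D}_1$ and $\hat{D}_2$ are exactly the subgraphs of $\hat{D}$ induced by $V_{t_1}$ and $V_{t_2}$. For an arc with both endpoints in $V_{t_i}$, the number of sets of $\hat{\mathcal{S}}$ containing both endpoints equals the number of sets of $\hat{\mathcal{S}}_i=\hat{\mathcal{S}}\cap V_{t_i}$ containing them, so Observation~\ref{obs:def} gives the same orientation. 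Moreover, every arc of $\hat{D}$ lies in $\hat{D}_1$ or $\hat{D}_2$, because every edge of $D[V_t]$ is covered by some bag in one of the two subtrees.

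\emph{Backward direction.} Suppose $(u,v)\in Q^+$. Then there is a chain $u=x_0,x_1,\ldots,x_m=v$ of vertices of $X_t$ with each $(x_{j-1},x_j)\in Q_1\cup Q_2$. By hypothesis each such pair is realised by a directed path in $\hat{D}_1$ or $\hat{D}_2$, and hence in $\hat{D}$. Concatenating these paths gives a directed walk from $u$ to $v$ in $\hat{D}$, and a walk contains a path.

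\emph{Forward direction.} Take a directed path $R$ from $u$ to $v$ in $\hat{D}$ and list the vertices of $R$ that lie in $X_t$, in order along $R$: $u=x_0,x_1,\ldots,x_m=v$. Consider a segment of $R$ strictly between $x_{j-1}$ and $x_j$. Its interior avoids $X_t$, so every interior vertex lies in $V_{t_1}\setminus X_t$ or in $V_{t_2}\setminus X_t$. Consecutive interior vertices are adjacent in $D$, and no edge crosses between these two sets, so the whole interior lies on one side, say in $V_{t_i}$. If the interior is empty, the single arc $x_{j-1}x_j$ still lies in some $\hat{D}_i$ by the edge-covering remark. In either case the segment is a directed path in $\hat{D}_i$, so $(x_{j-1},x_j)\in Q_i$ by hypothesis. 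Transitivity then gives $(u,v)\in Q^+$.

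I expect the main obstacle to be this forward-direction decomposition: arguing cleanly that an $X_t$-free segment cannot switch sides. It needs the standard fact that $X_t$ separates $V_{t_1}\setminus X_t$ from $V_{t_2}\setminus X_t$, which I will invoke directly from the properties of tree decompositions: a vertex in both $V_{t_1}$ and $V_{t_2}$ must lie in $X_t$, and an edge between the two private parts would need a bag containing both endpoints, which is impossible.
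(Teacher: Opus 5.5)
Your proposal is correct and follows essentially the same route as the paper: concatenating the paths that realise a chain in $Q^+$ for the backward direction, and for the forward direction splitting a $u$--$v$ path at its $X_t$-vertices and using the separation property to place each segment in $\hat{D}_1$ or $\hat{D}_2$. You also spell out details the paper leaves implicit, namely that $\hat{D}_i$ is the subgraph of $\hat{D}$ induced by $V_{t_i}$, why an $X_t$-free segment cannot switch sides, and that a concatenated walk contains a path.
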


\begin{proof}
Let $B = B_{t,Q_1,Q_2}$.

\smallskip
\noindent{Forward direction.}
Assume that there is a directed path from $u$ to $v$ in $\hat{D}$.
Since $V_t = V_{t_1} \cup V_{t_2}$ and $X_t$ separates the two subtrees,
this path can be decomposed into a sequence of subpaths whose endpoints lie in $X_t$,
such that each subpath is entirely contained in either $\hat{D}_1$ or $\hat{D}_2$.

Hence, there exist vertices $w_1,\dots,w_r\in X_t$ such that for each consecutive pair
$(x,y)\in \{(u,w_1),(w_1,w_2),\dots,(w_r,v)\}$,
there is a directed path from $x$ to $y$ in either $\hat{D}_1$ or $\hat{D}_2$.
By assumption, each such pair belongs to $Q_1\cup Q_2 = Q$.
Therefore, $(u,v)\in Q^+$, and thus $(u,v)$ is an arc of $B$.

\smallskip
\noindent{Backward direction.}
Assume that $(u,v)$ is an arc of $B$.
Then $(u,v)\in Q^+$, so there exist vertices $w_1,\dots,w_r\in X_t$
such that each pair in
\[
(u,w_1),(w_1,w_2),\dots,(w_r,v)
\]
belongs to $Q_1$ or $Q_2$.
By assumption, each such pair corresponds to a directed path
in $\hat{D}_1$ or $\hat{D}_2$, and hence in $\hat{D}$.
Concatenating these directed paths yields a directed path from $u$ to $v$ in $\hat{D}$.
\end{proof}

Algorithm~\ref{alg:join} computes the value of $c(t,P,\mathcal{S})$
from the values computed for $X_{t_1}$ and $X_{t_2}$.

\begin{algorithm}[H]
\caption{Compute $c(t,P,\mathcal{S})$ for a join node $X_t$}
\label{alg:join}
\SetNlSty{textbf}{(}{)}
\DontPrintSemicolon
\KwIn{$t, P, \mathcal{S}$}
\KwOut{$c(t,P,\mathcal{S})$}

$\overline{P} \gets P_t \setminus P$\;

\For{each $Q_1, Q_2 \subseteq P$}{
    \If{not $c(t_1,Q_1,\mathcal{S})$ or not $c(t_2,Q_2,\mathcal{S})$}{
        skip (this choice of $Q_1, Q_2$ and continue with the next iteration)
    }
    Construct $B = B_{t,Q_1,Q_2}$\;
    \If{$B$ is not a DAG}{
        skip (this choice of $Q_1, Q_2$ and continue with the next iteration)
    }
    \For{each $(u,v)\in P$}{
        \If{$(u,v)$ is not an arc of $B$}{
            skip (this choice of $Q_1, Q_2$ and continue with the next iteration of the main loop)
        }
    }
    \For{each $(u,v)\in \overline{P}$}{
        \If{$(u,v)$ is an arc of $B$}{
            skip (this choice of $Q_1, Q_2$ and continue with the next iteration of the main loop)
        }
    }
    \Return{\texttt{True}}
}
\Return{\texttt{False}}
\end{algorithm}

\begin{lemma}
\label{lem:join}
If correct values of $c(t_1,Y_1,\mathcal{Z})$ and $c(t_2,Y_2,\mathcal{Z})$
are available for all $Y_1,Y_2\subseteq P_t$ and all $\mathcal{Z}\in\mathcal{P}(X_t)^k$,
then Algorithm~\ref{alg:join} correctly computes $c(t,P,\mathcal{S})$
for a join node $X_t$ in time $2^{O(|X_t|^2)}\cdot |X_t|^{O(1)}$.
\end{lemma}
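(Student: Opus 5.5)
I would prove this in the same two-directional style as Lemmas~\ref{lem:introduce} and~\ref{lem:forget}, with Lemma~\ref{lem:auxiliary:join} replacing Lemma~\ref{lem:auxiliary:introduce:gen}.

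\emph{Forward direction.} Suppose $c(t,P,\mathcal{S})=\texttt{True}$, witnessed by $\hat{\mathcal{S}}$. Set $\hat{\mathcal{S}}_i=\hat{\mathcal{S}}\cap V_{t_i}$, so that $\hat{D}_i=D[V_{t_i}]\oplus\hat{\mathcal{S}}_i$ for $i=1,2$. For each pair inside $V_{t_i}$, the parity of the number of sets containing it is the same under $\hat{\mathcal{S}}_i$ and under $\hat{\mathcal{S}}$. Hence $\hat{D}_i$ is exactly the induced subgraph $\hat{D}[V_{t_i}]$, and in particular it is a DAG. Let $Q_i$ be the reachability relation of $\hat{D}_i$ on $X_t$.
\begin{itemize}
\item Since $\hat{\mathcal{S}}_i\cap X_t=\mathcal{S}$, the tuple $\hat{\mathcal{S}}_i$ witnesses $c(t_i,Q_i,\mathcal{S})=\texttt{True}$.
\item Paths in the subgraph $\hat{D}_i$ are paths in $\hat{D}$, so $Q_i\subseteq P$. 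This pair $(Q_1,Q_2)$ is therefore enumerated by the algorithm.
\item By Lemma~\ref{lem:auxiliary:join}, the arcs of $B$ are exactly the pairs reachable in $\hat{D}$, and these are exactly $P$. So both path-consistency loops pass.
\item $B$ is transitively closed. If $B$ contained a cycle, it would contain some arc $(u,v)$ together with $(v,u)$, giving a cycle in $\hat{D}$. Hence $B$ is a DAG, and the algorithm returns \texttt{True}.
\end{itemize}

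\emph{Backward direction.} Suppose the algorithm returns \texttt{True} for some $Q_1,Q_2\subseteq P$, with witnesses $\hat{\mathcal{S}}^1$ and $\hat{\mathcal{S}}^2$ for the two children. Define $\hat{\mathcal{S}}=(\hat{\mathcal{S}}^1\cap V_{t_1})\cup(\hat{\mathcal{S}}^2\cap V_{t_2})$ componentwise. Both witnesses meet $X_t=V_{t_1}\cap V_{t_2}$ in exactly $\mathcal{S}$, so $\hat{\mathcal{S}}\cap V_{t_i}=\hat{\mathcal{S}}^i\cap V_{t_i}$ and $\hat{\mathcal{S}}\cap X_t=\mathcal{S}$.

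Every arc of $D[V_t]$ lies inside $V_{t_1}$ or inside $V_{t_2}$, by the separation property of tree decompositions. Its orientation in $\hat{D}$ therefore agrees with its orientation in the corresponding $\hat{D}_i$. This gives $\hat{D}[V_{t_i}]=\hat{D}_i$, and the hypotheses of Lemma~\ref{lem:auxiliary:join} hold. Consequently reachability on $X_t$ in $\hat{D}$ equals the arc set of $B$, which by the passed checks is exactly $P$.

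\emph{Main obstacle: acyclicity of $\hat{D}$.} Let $C$ be a directed cycle in $\hat{D}$. If $C$ avoids $X_t$, it lies in $V_{t_i}\setminus X_t$ for a single $i$, since there are no arcs between $V_{t_1}\setminus X_t$ and $V_{t_2}\setminus X_t$. It would then be a cycle in the DAG $\hat{D}_i$, which is impossible. Otherwise $C$ passes through some $u\in X_t$.
\begin{itemize}
\item If $C$ meets a second vertex $v\in X_t$, then $u$ reaches $v$ and $v$ reaches $u$ in $\hat{D}$. So $(u,v)$ and $(v,u)$ are both arcs of $B$, forming a cycle and contradicting that $B$ is a DAG.
\item If $X_t\cap C=\{u\}$, then $C-u$ is a path avoiding $X_t$, hence contained in one side, so $C$ lies within a single $\hat{D}_i$. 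This again contradicts acyclicity of $\hat{D}_i$.
\end{itemize}
Thus $\hat{\mathcal{S}}$ witnesses $c(t,P,\mathcal{S})=\texttt{True}$.

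\emph{Running time.} There are at most $2^{2|X_t|^2}$ choices of $(Q_1,Q_2)$. For each, computing the transitive closure, testing whether $B$ is a DAG, and running the two consistency loops all take $|X_t|^{O(1)}$ time. This gives the claimed bound of $2^{O(|X_t|^2)}\cdot|X_t|^{O(1)}$.
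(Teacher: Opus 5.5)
Your proof is correct and follows the same route as the paper's: restrict the witness to $V_{t_i}$ in the forward direction, and in the backward direction take the union of the child witnesses and apply Lemma~\ref{lem:auxiliary:join}. It is more careful than the paper's own proof, which leaves several points implicit that you handle explicitly. The paper does not restrict the child witnesses to $V_{t_i}$ before taking the union, and it does not check that $Q_i\subseteq P$. It also derives acyclicity of $\hat{D}$ from that of $B$ via Lemma~\ref{lem:auxiliary:join} alone, without treating cycles that meet $X_t$ in at most one vertex.
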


\begin{proof}

For the forward direction, assume $c(t,P,\mathcal{S})$ is \texttt{True}, witnessed by
$\hat{\mathcal{S}}\in\mathcal{P}(V_t)^k$.
Define $\hat{\mathcal{S}}_i = \hat{\mathcal{S}}\cap V_{t_i}$ for $i\in\{1,2\}$,
and let $\hat{D},\hat{D}_1,\hat{D}_2$ be as above.

Let $Q_i\subseteq P_t$ be the set of pairs $(u,v)$
such that there is a directed path from $u$ to $v$ in $\hat{D}_i$.
Then $\hat{\mathcal{S}}_i$ witnesses $c(t_i,Q_i,\mathcal{S})=\texttt{True}$,
so the first skip condition is not triggered.

Since $\hat{D}$ is a DAG, Lemma~\ref{lem:auxiliary:join} implies that $B$ is a DAG.
Moreover, for $(u,v)\in P$ there is a directed path in $\hat{D}$ and hence an arc in $B$,
while for $(u,v)\in \overline{P}$ there is no such path and hence no arc in $B$.
Therefore, the algorithm returns \texttt{True}.

For the backward direction, assume the algorithm returns \texttt{True} for some $Q_1,Q_2$.
Then $c(t_i,Q_i,\mathcal{S})$ is \texttt{True} for $i=1,2$,
witnessed by $\hat{\mathcal{S}}_i$.
Let $\hat{\mathcal{S}} = \hat{\mathcal{S}}_1 \cup \hat{\mathcal{S}}_2$.

Since $B$ is a DAG, Lemma~\ref{lem:auxiliary:join} implies that
$\hat{D}=D[V_t]\oplus\hat{\mathcal{S}}$ is a DAG.
The arc tests in the algorithm ensure that reachability in $\hat{D}$
matches exactly the pairs in $P$.
Thus, $\hat{\mathcal{S}}$ witnesses $c(t,P,\mathcal{S})=\texttt{True}$.

The running time bound follows from the fact that each of $Q_1$ and $Q_2$
ranges over at most $2^{O({|X_t|}^2)}$ possibilities, while all operations inside
the main loop can be carried out in time polynomial in the bag size $|X_t|$.
\end{proof}

\subsection{Putting it all together}
Now we are ready to prove Theorem~\ref{thm:tw}.


\begin{proof}[Proof of Theorem~\ref{thm:tw}]
    There exists an algorithm running in time
    $2^{O(\mathrm{tw})}\cdot n^{O(1)}$ that computes a tree decomposition
    of width at most $2\cdot \mathrm{tw}+1$~\cite{korhonen2023single}.
    Given such a tree decomposition, it is well known that it can be
    transformed in polynomial time into a nice tree decomposition of the
    same width and with $O(\mathrm{tw}\cdot n)$ nodes.

    As discussed earlier, for a leaf node there is only a single table
    entry to compute, which can be done in constant time.
    For every other node, we apply
    Algorithm~\ref{alg:introduce},
    Algorithm~\ref{alg:forget}, or
    Algorithm~\ref{alg:join},
    depending on the type of the node, in a bottom-up manner.

    For each node $t$, we compute
    $2^{O(tw^2)}\cdot 2^{O(k\cdot \mathrm{tw})}$ table entries:
    there are $2^{O(tw^2)}$ possibilities for $P$
    and $2^{O(k\cdot \mathrm{tw})}$ possibilities for $\mathcal{S}$.
    Filling a single entry takes
    $2^{|X_t|^2}\cdot |X_t|^{O(1)}$ time for an introduce node,
    $2^{2|X_t|+k}\cdot |X_t|^{O(1)}$ time for a forget node,
    and $2^{O(|X_t|^2)}\cdot |X_t|^{O(1)}$ time for a join node.
    Since $|X_t|\leq \mathrm{tw}+1$, all these bounds are subsumed by
    $2^{O(\mathrm{tw}^2 + k\cdot \mathrm{tw})}$.
    As the nice tree decomposition has $O(\mathrm{tw}\cdot n)$ nodes,
    the overall running time is
    $2^{O(\mathrm{tw}(k + \mathrm{tw}))}\cdot n^{O(1)}$.

    By Lemmas~\ref{lem:introduce}, \ref{lem:forget}, and \ref{lem:join},
    the input graph $D$ is $k$-invertible if and only if
    $c(r,\emptyset,\emptyset)=\texttt{True}$,
    where $X_r$ is the root bag of the nice tree decomposition.
\end{proof}

\medskip
\noindent
\textbf{Concluding remarks.}
We conclude by posing the following question. In Section~\ref{sec:block}, we obtained a fixed-parameter tractable algorithm for \textsc{$k$-Inversion} when the underlying undirected graph of the input digraph is a block graph. It is unclear how to adapt this approach even to the case where the underlying graph can be covered by two cliques with a large intersection. Does \textsc{$k$-Inversion}, parameterized by $k$, admit an FPT algorithm on graphs with this structure?

%
%
\bibliographystyle{splncs04}
\bibliography{main}
\end{document}